\documentclass[11pt]{article}

\usepackage[margin=1in]{geometry}
\usepackage{amsmath,amssymb,amsthm,mathtools}
\usepackage{microtype}
\usepackage{booktabs}
\usepackage{enumitem}
\usepackage{xcolor}

\usepackage[lined,boxed,ruled,norelsize,algo2e,linesnumbered,noend]{algorithm2e}
\usepackage[hypertexnames=true]{hyperref}
\usepackage{aliascnt}
\hypersetup{
  colorlinks=true,
  linkcolor=blue!60!black,
  citecolor=blue!60!black,
  urlcolor=blue!60!black
}
\usepackage[nameinlink,capitalize,noabbrev]{cleveref}

\usepackage{tikz}
\usetikzlibrary{arrows.meta}

\newtheorem{theorem}{Theorem}[section]
\newaliascnt{lemma}{theorem}
\newtheorem{lemma}[lemma]{Lemma}
\aliascntresetthe{lemma}
\newaliascnt{corollary}{theorem}

\aliascntresetthe{corollary}
\newaliascnt{proposition}{theorem}
\newtheorem{proposition}[proposition]{Proposition}
\aliascntresetthe{proposition}
\newaliascnt{definition}{theorem}
\newtheorem{definition}[definition]{Definition}
\aliascntresetthe{definition}
\newaliascnt{remark}{theorem}

\aliascntresetthe{remark}
\newaliascnt{conjecture}{theorem}
\newtheorem{conjecture}[conjecture]{Conjecture}
\aliascntresetthe{conjecture}
\newaliascnt{fact}{theorem}
\newtheorem{fact}[fact]{Fact}
\aliascntresetthe{fact}

\crefname{theorem}{Theorem}{Theorems}
\crefname{lemma}{Lemma}{Lemmas}
\crefname{corollary}{Corollary}{Corollaries}
\crefname{proposition}{Proposition}{Propositions}
\crefname{definition}{Definition}{Definitions}
\crefname{remark}{Remark}{Remarks}
\crefname{conjecture}{Conjecture}{Conjectures}
\crefname{fact}{Fact}{Facts}

\renewcommand{\epsilon}{\varepsilon}
\newcommand{\eps}{\epsilon}
\newcommand{\defeq}{\mathrel{:=}}

\newcommand{\E}{\mathbb{E}}

\newcommand{\tO}{\widetilde O}

\newcommand{\Par}[1]{\left(#1\right)}
\newcommand{\Abs}[1]{\left|#1\right|}
\newcommand{\norm}[1]{\left\lVert#1\right\rVert}

\newcommand{\ceil}[1]{\left\lceil#1\right\rceil}

\renewcommand{\AA}{\boldsymbol{\mathrm A}}
\newcommand{\DD}{\boldsymbol{\mathrm D}}
\newcommand{\II}{\boldsymbol{\mathrm I}}
\newcommand{\LL}{\boldsymbol{\mathrm L}}
\newcommand{\MM}{\boldsymbol{\mathrm M}}
\newcommand{\NN}{\boldsymbol{\mathrm N}}
\newcommand{\PP}{\boldsymbol{\mathrm P}}
\newcommand{\bb}{\boldsymbol{\mathrm b}}
\newcommand{\dd}{\boldsymbol{\mathrm d}}
\newcommand{\ee}{\boldsymbol{\mathrm e}}
\newcommand{\hh}{\boldsymbol{\mathrm h}}
\newcommand{\ppi}{\boldsymbol{\mathrm \pi}}

\newcommand{\ssig}{\boldsymbol{\mathrm \sigma}}

\newcommand{\vv}{\boldsymbol{\mathrm v}}
\newcommand{\ww}{\boldsymbol{\mathrm w}}
\newcommand{\xx}{\boldsymbol{\mathrm x}}

\newcommand{\vone}{\boldsymbol{\mathrm 1}}
\newcommand{\vzero}{\boldsymbol{\mathrm 0}}

\newcommand{\ER}{\operatorname{ER}}

\DontPrintSemicolon
\SetFuncSty{textsc}
\SetKwProg{Pro}{Procedure}{}{}
\SetKwInOut{Input}{Input}
\SetKwInOut{Output}{Output}
\SetKwFunction{Initialize}{Initialize}
\SetKwFunction{Delete}{Delete}
\SetKwFunction{Insert}{Insert}
\SetKwFunction{Update}{Update}
\SetKwFunction{Query}{Query}
\SetKwFunction{Median}{Median}
\SetKw{OMvR}{OMvR}

\title{Fully Dynamic Random-Walk Quantities on Undirected Expanders and
$\kappa$-Uniform Directed Graphs}
\author{
\begin{tabular}{c c}
Richard Peng & Sushant Sachdeva\\
yangp@cs.cmu.edu & sachdeva@cs.toronto.edu\\
Carnegie Mellon University & University of Toronto
\\[2ex]
Junzhao Yang & Yibin Zhao\\
junzhaoy@andrew.cmu.edu & ybzhao@cs.toronto.edu\\
Carnegie Mellon University & University of Toronto
\end{tabular}
}
\date{}

\begin{document}
\maketitle

\begin{abstract}
We study fully dynamic data structures for random-walk quantities on
undirected expanders and fast-mixing directed graphs.
For an unweighted
graph whose normalized Laplacian has spectral gap at least $\gamma$, we give a
$(1\pm\eps)$ effective-resistance data structure with expected update time
$O(\gamma^{-4}\eps^{-2}\log n\log^4(n/\eps))$ and expected $O(1)$ query time.

For directed graphs, somewhat surprisingly,
the OMv conjecture precludes a fast data structure
based on rapid mixing conditions alone.
To address this, we introduce the notion of $\kappa$-uniform directed graphs, which requires the stationary distribution to $\kappa$-approximate the uniform distribution,
and show that in this setting,
random walk based data structures with $\textnormal{poly}(\kappa, \log{n}, \epsilon^{-1})$  update/query times are possible.
\end{abstract}

\section{Introduction}

Expanders and graph partitioning are highly useful primitives in designing efficient graph algorithms~\cite{SaranurakW19,GoranciRST21}.
An approach starting from the celebrated R{\"a}cke trees~\cite{R02} provided ways to solve a problem on general graphs by solving a sequence of problems on expanders~\cite{C14},
while stitching together solutions at their boundaries.
Such approaches have been behind many breakthroughs in both static and dynamic graph algorithms~\cite{SpielmanT04,bernstein2020deterministic,ChenKLMP24,EHL25,EHL26}.

One significant barrier facing the expander paradigm of graph algorithm design is that the algorithm, or data structure,
must work when the underlying graph is an expander.
While this is often the case for static problems,
and provided much of the motivation of these expander-based approaches,
the status of such primitives for dynamic graph problems,
especially fully dynamic graphs,
is surprisingly unexplored.
In particular, it is open whether one can maintain, in subpolynomial time,
$(1 + \epsilon)$-approximate $s$-$t$ min-cut or shortest path
in fully dynamic undirected graphs that always remain expanders~\cite{GoranciRST21,HLS24}.

In this paper we study data structures for maintaining random walk related quantities on fully dynamic expanders.
We give positive results on undirected graphs, as well as show the limitations of these methods on directed graphs.
We build upon the static data structure by Li and Sachdeva \cite{LiS23}, which showed that short random walks suffice to answer approximate effective resistance queries in essentially constant time.
By maintaining dynamic versions of these walks on an unweighted $\gamma$ spectral expander, we show that a data structure supporting approximate all-pairs effective resistance queries can be maintained in only
$ \widetilde{O}\left(\gamma^{-c}\right) $
update time, for a constant $c$.
This gives a substantial improvement over what is known for general dynamic graphs, for which the best runtime is about $m^{2/3}$\cite{ChenGHPS20}.

The random walk interpretation of commute time and the positive result for undirected expanders naturally suggest that similarly efficient dynamic data structures might exist for directed graphs whose random walks also mix rapidly.
Here however an obstacle appears: rapid mixing alone places little restriction on the stationary distribution of a directed random walk, which may be highly non-uniform.
By leveraging this fact, we give a conditional lower bound based on the OMv conjecture~\cite{HenzingerKNS15} that no data structure can simultaneously have update and query ($3/2$-approx) time $O\left(n^{1/2-\delta}\right)$,
even when the graph has mixing time $O(\log{n})$, and all outdegrees are constant.

This hardness result in turn suggests a natural additional condition for rapidly mixing directed graphs:
the stationary distribution is also close to uniform.
We incorporate this by defining $\kappa$-uniform graphs in~\cref{def:kappa-uniform}, and show that this setting enables the rapid query of stationary distributions.
This in turn allows random walk based approaches on the Eulerian graph obtained by rescaling,
and thus efficient commute time queries.
We believe this definition that combines both rapid mixing and uniformity of the stationary distribution provides a new setting for studying both dynamic as well as static graph algorithms.

\subsection{Overview}
For undirected expanders, our efficient dynamic data structure for computing effective resistance, and equivalently commute time, stems from the observation
in the static graph case by Li and Sachdeva \cite{LiS23} that the hitting statistics computed by a collection of short, polylogarithmic-length random walks starting from each vertex suffice to generate a data structure for answering high-accuracy effective-resistance queries on expanders in $O(1)$ time. 

The core of our efficient dynamic data structure is a fully dynamic algorithm for maintaining such uniform short random walks under edge insertions and deletions in unweighted undirected graphs (not necessarily expanders). 
In particular, we observe that only roughly a $1/d_v$ fraction of the walks incident to an affected vertex $v$ need to be resampled. 
Crucially, we also discover that the congestion of each vertex (i.e., the number of visits to a vertex by all random walks) scales at most linearly with its degree $d_v$.
Given that the total number of random-walk steps is $\tO(n)$, we can guarantee an update time of only $\tO(1)$.

For directed graphs, we discover that the fast-mixing property alone cannot give a similarly efficient algorithm. 
Many Markov-chain quantities on directed graphs (e.g., hitting time and commute time) require high-accuracy
estimates of the stationary distribution, which is a computationally challenging task for dynamic directed graphs, as we establish through a conditional hardness result based on OMv.

For the hardness result, we develop a reduction from an instance of matrix--vector multiplication of dimension $n$ to a fast-mixing directed graph of size $O(n^2)$. 
The main component of the graph is a standard construction that encodes each entry of the vector $\MM\vv$.  In particular, within this component, the entry $[\MM\vv]_i=1$ if and only if there is a directed path from a unique source $s$ to a vertex $r_i$ corresponding to the $i$th entry.
We show that the existence or absence of such a path is sufficient to create a difference in the stationary distribution at vertex $r_i$, allowing us to distinguish the value of $[\MM\vv]_i$. 
The main component of the graph consists of two connection layers, one for $\MM$ and one for $\vv$, where each edge in the layers maps one-to-one to an entry of $\MM$ or $\vv$. 
Changing $\vv$ (or $\MM$ between instances) corresponds directly to changing these connections
and, thus, the existence of paths from $s$ to the vertices $r_i$.

To ensure that the graph is strongly connected and $\kappa$-mixing, we also include two auxiliary components. 
The first component gives compact forward connections from $s$ to all vertices in the main component, ensuring
reachability. 
However, this gives a connection from $s$ to an $r_i$ that previously did not have a path. 
To ensure these connections do not interfere with the main paths, we give them a long chain of connections at the beginning, with many reset edges back to $s$ distributed along it, so that all of these connections are assigned extremely low probabilities.

Finally, all other vertices also have such reset edges back to $s$, so that at each non-source vertex, the probability of returning to $s$ is at least $1/4$.
This creates an extremely skewed stationary distribution in which $s$ receives a large mass from everywhere else. 
Such a construction also gives fast mixing of the graph, which can be proved by a standard random walk coupling argument: two walks starting from different vertices can be coupled to meet at $s$ with probability at least $1/4$ at each step. 
Consequently, our graph is always $O(\log n)$-mixing for any possible $\MM$ and $\vv$.

Each update of $\vv$ incurs only $O(n)$ updates and $O(n)$ stationary entry queries on the graph, for a total of $O(n^2)$ updates and queries throughout
the instance, on a graph of size $N=O(n^2)$.  This gives a roughly square-root
update- and query-time lower bound for this problem.

Finally, for our $\kappa$-uniform directed graphs, we use the uniformity of the stationary distributions to avoid the problem of such skewed resets. 
Efficient dynamic algorithms are once again possible for such graphs. 
For both stationary distribution and commute time approximation, we give observations analogous to those for undirected expanders: (1) there are data structures for efficiently querying these quantities that can be computed from short random walks; and (2) such short random walks incur only $\tO(\kappa^c)$ congestion at each vertex. 
Crucially, (2) is only true from the uniformity property; $\kappa$-mixing alone cannot guarantee it. 
These observations allow us to reuse the same dynamic short random walk algorithm from the undirected expander case with minimal changes and obtain efficient dynamic algorithms.

\section{Preliminaries}

Let $G=(V,E)$ be an unweighted, undirected (multi)graph with adjacency matrix $\AA$ and degree matrix $\DD$. 
Its Laplacian matrix is $\LL = \DD - \AA$ and the normalized Laplacian is $\NN=\II-\DD^{-1/2}\AA\DD^{-1/2}$.

For a directed (multi)graph $G=(V,E)$, we again use $\AA$ to denote its adjacency matrix while $\DD$ is the diagonal matrix of out-degrees.
The lazy random walk transition matrix is defined by $\PP = \frac{1}{2}(\II+\AA\DD^{-1})$.

A connected undirected graph is a $\gamma$-spectral expander if its spectral gap is at least $\gamma$.
\begin{definition}[Spectral expander]
\label{def:spectral-expander}
For $\gamma>0$, we call $G$ a $\gamma$-spectral expander if
$\lambda_2(\NN_G)\ge\gamma$.
\end{definition}

Expanders are naturally fast mixing.
\begin{proposition}[Expanders are fast mixing]
\label{prop:expander-mixing}
A $\gamma$-spectral expander is $O(\gamma^{-1}\log n)$-mixing.
\end{proposition}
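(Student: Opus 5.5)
The plan is to run the standard spectral argument for the lazy walk $\PP=\frac12(\II+\AA\DD^{-1})$. Mixing is measured in whatever distance the paper fixes (total variation or pointwise relative error), with a constant accuracy target; any polynomially small target changes only constants. In the undirected case $\PP$ is similar to a symmetric matrix:
\[
\DD^{-1/2}\PP\DD^{1/2}=\tfrac12\Par{\II+\DD^{-1/2}\AA\DD^{-1/2}}=\II-\tfrac12\NN.
\]
Its eigenvalues are therefore $1-\lambda_i(\NN)/2$. Since $0=\lambda_1(\NN)<\gamma\le\lambda_2(\NN)\le\dots\le\lambda_n(\NN)\le 2$, these lie in $[0,1]$. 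The top eigenvalue is $1$, with eigenvector $\DD^{1/2}\vone$, which corresponds to the stationary distribution $\ppi=\dd/\vone^\top\dd$. Every other eigenvalue lies in $[0,1-\gamma/2]$. Laziness is what removes any eigenvalue near $-1$.

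Next I bound the distance to stationarity. Fix a start vertex $u$ and write $\ppi_t=\PP^t\ee_u$. Conjugating by $\DD^{1/2}$ and expanding in the orthonormal eigenbasis of $\II-\frac12\NN$ gives
\[
\DD^{-1/2}(\ppi_t-\ppi)=\Par{\II-\tfrac12\NN}^t\DD^{-1/2}(\ee_u-\ppi),
\]
where $\DD^{-1/2}(\ee_u-\ppi)$ is orthogonal to $\DD^{1/2}\vone$. Hence
\[
\norm{\DD^{-1/2}(\ppi_t-\ppi)}_2\le(1-\gamma/2)^t\norm{\DD^{-1/2}\ee_u}_2\le(1-\gamma/2)^t.
\]
To convert this to total variation I use Cauchy--Schwarz:
\[
\norm{\ppi_t-\ppi}_1\le\norm{\DD^{-1/2}(\ppi_t-\ppi)}_2\cdot\norm{\DD^{1/2}\vone}_2=(1-\gamma/2)^t\sqrt{\mathrm{vol}(G)}.
\]
If mixing is instead measured entrywise relative to $\ppi$, I bound each coordinate by $\sqrt{d_v}$ times the weighted $\ell_2$ norm and divide by $\ppi_v=d_v/\mathrm{vol}(G)$. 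This gives the same form of bound with an extra factor that is polynomial in $\mathrm{vol}(G)$.

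Finally I choose $t$. For an unweighted (multi)graph with polynomially many edges, $\mathrm{vol}(G)\le\mathrm{poly}(n)$. Using $(1-\gamma/2)^t\le e^{-\gamma t/2}$, taking $t=C\gamma^{-1}\log n$ with a large enough constant $C$ makes the error at most $1/\mathrm{poly}(n)$, uniformly over start vertices. This yields $O(\gamma^{-1}\log n)$-mixing.

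The only delicate step is the norm conversion. One must check that the polynomial-in-$\mathrm{vol}(G)$ blowup is absorbed into the $\log n$ factor. This relies on the graph being unweighted with polynomially bounded multiplicities, which is the setting of this paper. Everything else is the standard spectral computation.
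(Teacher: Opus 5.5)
Your argument is correct and follows the paper's proof: the paper also observes that the nontrivial eigenvalues of the lazy walk have absolute value at most $1-\gamma/2$, gets convergence in the degree-weighted norm, and absorbs the polynomial loss from the norm conversion by taking $C\gamma^{-1}\log n$ steps (it uses maximum degree at most $n$ where you use $\mathrm{vol}(G)\le\mathrm{poly}(n)$). The paper's notion of mixing is absolute entrywise error at most $n^{-10}$, and your $\ell_1$ bound $(1-\gamma/2)^t\sqrt{\mathrm{vol}(G)}$ already implies this directly, since each entry is at most the $\ell_1$ norm.
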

\begin{proof}
$\lambda_2(\NN_G)\ge\gamma$ implies that
each nontrivial eigenvalue of $\PP$ has absolute value at most $1-\gamma/2$.
The convergence is with respect to the inverse degree.
Since the maximum degree in an unweighted graph is at most $n$, after a scaling, it suffices to take $\ell \ge C \gamma^{-1} \log n$ to make the entry-wise stationary approximation at most $n^{-10}$ as required.
\end{proof}

We formally define fast-mixing directed graphs.
\begin{definition}[$\kappa$-mixing]
\label{def:kappa-mixing}
For $\kappa \ge 1$, a directed graph $G$ is
$\kappa$-mixing\footnote{Common definitions of $\kappa$-mixing are equivalent up to logarithmic factors in the mixing time. We choose the exponent $10$ for convenience.} 
if it is strongly connected and, for the stationary distribution $\ppi$ of $\PP$,
\begin{equation}
    |(\PP^\kappa)_{vu}-\ppi_v|\le n^{-10}
    \qquad\text{for all }u,v\in V.
    \label{eq:directed-kappa-mixing}
\end{equation}
\end{definition}

For $\kappa$-uniform graphs, we additionally require the stationary distribution to be close to uniform.
\begin{definition}[$\kappa$-uniform]
\label{def:kappa-uniform}
A $\kappa$-mixing directed graph is $\kappa$-uniform if its stationary
distribution also satisfies
\begin{equation}
    \frac{1}{\kappa n}\le \ppi_v\le\frac{\kappa}{n}
    \qquad\text{for all }v\in V.
    \label{eq:directed-kappa-uniform}
\end{equation}
\end{definition}

We define hitting vectors conceptually by $\hh_v \defeq \sum_{k=0}^\infty \PP^k \ee_v$ for a vertex $v$, which diverges.
For distinct $u,v$, we denote the difference of their hitting vectors by
\[
    \hh_{u,v} \defeq \sum_{k=0}^{\infty}\PP^k(\ee_u - \ee_v) = \hh_u - \hh_v 
\]
which is typically well-defined.

We also define the truncated variants of these hitting vectors at length $\ell$ by $\hh^{(\ell)}_v \defeq \sum_{k=0}^{\ell-1} \PP^k \ee_v$, $\hh^{(\ell)}_{u,v} \defeq \hh^{(\ell)}_u - \hh^{(\ell)}_v$.
Similarly, the exact length variants are
$\hh^{(=\ell)}_v \defeq \PP^\ell \ee_v$, $\hh^{(=\ell)}_{u,v} \defeq \hh^{(=\ell)}_u - \hh^{(=\ell)}_v$.

We denote by $H_{u}(v)$ the hitting time of a random walk from $u$ to $v$, i.e. the expected number of steps until a (non-lazy) random walk starting from $u$ first visits $v$.
The commute time $C_{u,v}$ is the expected time that a random walk starting from one of the vertices visits the other vertex and returns.
In both undirected and directed cases, $C_{u,v} = H_{u,v}+H_{v,u}$.

On undirected graphs, the commute time between a pair of vertices is equivalent to the effective resistance between them.
\begin{fact}[Commute time and effective resistance]
\label{fact:undirected-commute-er}
Let $G=(V,E)$ be a connected unweighted undirected graph with $m=|E|$. Then, the commute time satisfies $C_G(u,v) = 2m \ER_G(u,v)$.
\end{fact}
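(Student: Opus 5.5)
The plan is to derive the identity from two classical ingredients: Kac's return-time formula, which gives the hitting times, and the electrical interpretation of commute time via harmonic potentials. Throughout, the walk is the simple (non-lazy) random walk with transition probabilities $\AA\DD^{-1}$. Hitting times are defined for this walk, so laziness plays no role. Its stationary distribution is $\ppi_x = d_x/(2m)$, because $\AA\DD^{-1}\dd = \AA\vone = \dd$ and $\sum_x d_x = 2m$.

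\textbf{Step 1: a linear system for hitting times.} Fix $v$ and let $\phi(x) \defeq H_x(v)$. Conditioning on the first step gives $\phi(v)=0$ and, for $x\neq v$,
\[
d_x\phi(x) = d_x + \sum_{y\sim x}\phi(y).
\]
In matrix form this reads $(\LL\phi)(x) = d_x$ for $x\ne v$. The $v$-th coordinate is then forced by $\vone^\top\LL\phi = 0$, which gives $(\LL\phi)(v) = d_v - 2m$. Hence
\[
\LL\phi = \dd - 2m\,\ee_v .
\]
Symmetrically, $\psi(x)\defeq H_x(u)$ satisfies $\LL\psi = \dd - 2m\,\ee_u$. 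Subtracting, we get $\LL(\phi-\psi) = 2m(\ee_u-\ee_v)$.

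\textbf{Step 2: evaluate the commute time.} We have
\[
C_{u,v} = H_u(v) + H_v(u) = \phi(u) - \phi(v) + \psi(v) - \psi(u) = (\ee_u-\ee_v)^\top(\phi-\psi).
\]
Since $G$ is connected, the kernel of $\LL$ is spanned by $\vone$, and $\ee_u-\ee_v\perp\vone$. Therefore
\[
\phi-\psi = 2m\,\LL^{+}(\ee_u-\ee_v) + c\vone
\]
for some constant $c$. The constant contributes nothing when paired with $\ee_u-\ee_v$. This yields
\[
C_{u,v} = 2m(\ee_u-\ee_v)^\top\LL^{+}(\ee_u-\ee_v) = 2m\,\ER_G(u,v),
\]
using the standard definition of effective resistance.

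\textbf{Main obstacle.} The only delicate point is justifying the first-step system. We need hitting times to be finite, which holds because the graph is finite and connected. We also need the solution to be unique up to the boundary condition. This holds because $\LL$ restricted to $V\setminus\{v\}$ (the grounded Laplacian) is nonsingular for a connected graph, so $\phi$ is determined. The computation of the $v$-th coordinate via $\vone^\top\LL=0$ is exactly where the total edge count $m$ enters. Multigraphs are handled identically if $\AA$ counts edge multiplicities.
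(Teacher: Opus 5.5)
Your proof is correct. The paper gives no proof of this statement; it is quoted as the classical commute-time identity, so there is no argument of the paper's to match. Yours is the standard self-contained derivation: first-step analysis gives $\LL\phi=\dd-2m\,\ee_v$ for the hitting-time potential, and pairing $\LL(\phi-\psi)=2m(\ee_u-\ee_v)$ against $\ee_u-\ee_v$ produces $2m\,\ER_G(u,v)$.

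Within the paper, the identity could instead be read off from the (likewise unproved) Fact $C_{uv}=\frac{\hh_{u,v}(u)}{2\ppi_u}-\frac{\hh_{u,v}(v)}{2\ppi_v}$, combined with \Cref{prop:er_id} and $\ppi=\dd/(2m)$; both sides then equal $m(\ee_u-\ee_v)^\top\DD^{-1}\hh_{u,v}$. That route is phrased in terms of the visit-count vector $\hh_{u,v}$ that the data structures estimate, but it pushes all the probabilistic content into the hitting-vector Fact. Your route works with hitting times directly and needs only first-step conditioning, finiteness of hitting times, and $\ker\LL=\mathrm{span}\{\vone\}$.

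Three small remarks on your write-up. First, despite your opening sentence, Kac's formula is never used. The relation $\vone^\top\LL=0$ that fixes $(\LL\phi)(v)=d_v-2m$ replaces it, and in fact re-derives Kac at $v$, since it says $\sum_{y\sim v}H_y(v)=2m-d_v$. Second, the uniqueness remark about the grounded Laplacian is unnecessary, because you work with the actual hitting-time vector rather than recovering it from the system. Third, your computation shows the constant is really $\sum_x d_x$, which equals $2m$ for multigraphs only if any self-loop is counted twice in the degree.
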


The hitting and commute times of random walks can be quantified using the hitting vectors and the stationary distribution of the random walks.
\begin{fact}
\[
H_u(v) = -\frac{\hh_{u,v}(v)}{2\ppi_v}, \quad
H_v(u) = \frac{\hh_{u,v}(u)}{2\ppi_u}, \quad
C_{uv} = \frac{\hh_{u,v}(u)}{2\ppi_u} - \frac{\hh_{u,v}(v)}{2\ppi_v}.
\]
\end{fact}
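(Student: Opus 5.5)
The plan is to identify $\hh_{u,v}$, up to a multiple of $\ppi$, with the expected visit-count vector of the lazy walk started at $u$ and stopped on first hitting $v$. The formulas then follow from two normalizations: evaluating at $v$, and summing all coordinates.

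\textbf{Step 1: convergence and the defining equation.} I first show that $\hh_{u,v}$ is well defined. Since $G$ is strongly connected and $\PP$ is lazy, $\PP$ is irreducible and aperiodic. Because $\vone^\top(\ee_u-\ee_v)=0$, we can write $\PP^k(\ee_u-\ee_v) = (\PP^k - \ppi\vone^\top)(\ee_u-\ee_v)$, and $\PP^k - \ppi\vone^\top$ decays geometrically, so the series defining $\hh_{u,v}$ converges absolutely. Telescoping the series gives
\[
(\II-\PP)\hh_{u,v} = \ee_u-\ee_v .
\]
Column-stochasticity gives $\vone^\top \PP^k(\ee_u-\ee_v)=0$ for every $k$, hence
\[
\vone^\top \hh_{u,v}=0 .
\]

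\textbf{Step 2: the visit-count vector.} Let $T$ be the first time the lazy walk from $u$ hits $v$, and let $\yy$ be defined by $\yy(w)=\E[\#\{0\le t<T : X_t=w\}]$. Then $\yy(v)=0$, and $\vone^\top\yy=\E[T]$, which is finite by irreducibility of the finite chain. A first-step (flow-conservation) argument gives $\yy = \ee_u + \PP\yy - \ee_v$. Indeed, the expected number of visits to $w$ strictly before $T$ equals $[w=u]$ plus the expected number of arrivals at $w$ from states visited before $T$; for $w=v$ that arrival count is exactly $1$. Hence $(\II-\PP)\yy=\ee_u-\ee_v$.

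\textbf{Step 3: solving for the hitting time.} For an irreducible chain, $\ker(\II-\PP)=\mathrm{span}(\ppi)$, so $\hh_{u,v}=\yy+c\,\ppi$ for some scalar $c$.
\begin{itemize}
\item Evaluating at $v$ and using $\yy(v)=0$ gives $c=\hh_{u,v}(v)/\ppi_v$.
\item Summing coordinates and using $\vone^\top\hh_{u,v}=0$ and $\vone^\top\ppi=1$ gives $\E[T] = -c = -\hh_{u,v}(v)/\ppi_v$.
\end{itemize}
Each step of the lazy walk is a real move with probability $1/2$, independently. So the lazy hitting time is a sum of $H$ real moves, each costing a geometric number of steps with mean $2$, and by Wald's identity $\E[T]=2H_u(v)$. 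This yields $H_u(v) = -\hh_{u,v}(v)/(2\ppi_v)$.

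\textbf{Step 4: the remaining formulas.} Applying the same result with the roles of $u$ and $v$ swapped, and using $\hh_{v,u}=-\hh_{u,v}$, gives $H_v(u) = -\hh_{v,u}(u)/(2\ppi_u) = \hh_{u,v}(u)/(2\ppi_u)$. Adding the two formulas gives the expression for $C_{uv}$.

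The main obstacle is Step 2: verifying rigorously the conservation identity for $\yy$, including the fact that the walk enters $v$ exactly once. I would handle it by writing $\yy=\sum_{t\ge0}\Pr[X_t=\cdot,\,t<T]$ and applying one step of $\PP$ to this series. The laziness factor of $2$ in Step 3 is routine by comparison.
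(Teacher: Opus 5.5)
Your proof is correct. The convergence of $\hh_{u,v}$ and the stopped-occupation identity $(\II-\PP)\boldsymbol{\mathrm y}=\ee_u-\ee_v$ both check out: the walk arrives at $v$ exactly once, and $\boldsymbol{\mathrm y}(v)=0$ rules out arrivals from $v$. The one-dimensional kernel $\mathrm{span}(\ppi)$, the two normalizations, and the factor $2$ from laziness are also sound, the last because the holding times are independent of the embedded non-lazy walk.

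The paper states this as a standard fact without proof, so there is no argument to compare against. Your derivation is the standard one and follows the same pattern as the paper's proof of \Cref{prop:er_id}: solve $(\II-\PP)\xx=\ee_u-\ee_v$, then pin down the kernel component. The only hypothesis you use, irreducibility, holds in every setting where the paper applies the fact.
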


For our conditional hardness result, we invoke the OMv conjecture.
This is a standard assumption for fine-grained complexity questions on dynamic data structures.
\begin{conjecture}[OMv conjecture~\cite{HenzingerKNS15}]
\label{conj:omv}
For every constant $\delta>0$, no randomized algorithm solves Online Boolean Matrix-Vector Multiplication (OMv) for a boolean matrix of size $n \times n$ and vectors of size $n$ in total time $O(n^{3-\delta})$ with probability at least $2/3$.
\end{conjecture}

\section{Fully Dynamic Effective Resistances on Expanders}
\label{sec:undir}

Li and Sachdeva \cite{LiS23} demonstrated that for expanders, a sparse data structure can be efficiently computed from which one can query the effective resistance between any pair of vertices.
Specifically, for a $\gamma$-spectral expander, Theorem 3.1 of
\cite{LiS23} shows that it suffices to sample for each vertex $(\gamma^{-1} \eps^{-2} \log n \log \frac{n}{\eps})$ random walks for each length $0 \le k < O(\gamma^{-1} \log \frac{n}{\eps})$
to support a $(1+\eps)$-approximate effective resistance query on a pair of vertices in expected time $O(1)$.

In \Cref{thm:dyner_formal}, we show that for unweighted undirected expander, such result carries over to the dynamic setting.
\begin{theorem}[Dynamic effective resistance in expanders]
\label{thm:dyner_formal}
There is a data structure for a fully dynamic unweighted undirected $\gamma$-expander with $O(\gamma^{-4}\eps^{-2}\log n \log^4 \frac{n}{\eps})$ expected time per update that supports $(1+\eps)$-approximate effective resistance and commute time estimation for each pair of vertices in $O(1)$ expected time.
It succeeds with high probability and has preprocessing time and space usage of $O(m+n\gamma^{-3}\eps^{-2}\log n \log^3 \frac{n}{\eps})$.
\end{theorem}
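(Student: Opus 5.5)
We will reduce the theorem to the static estimator of Li and Sachdeva~\cite{LiS23} combined with a dynamic maintenance procedure for a fixed collection of short random walks. Set $\ell = O(\gamma^{-1}\log\frac{n}{\eps})$ and $R = O(\gamma^{-1}\eps^{-2}\log n\log\frac{n}{\eps})$. For every vertex $v$ and every length $0\le k<\ell$ we keep $R$ independent lazy random walks of length $k$ starting at $v$; equivalently, we keep $R$ walks of length $\ell$ per vertex and use their prefixes. Theorem~3.1 of \cite{LiS23} says that the hitting statistics of these walks yield a $(1+\eps)$-approximate effective resistance in expected $O(1)$ time. The statistics are, for each $v$ and each $k$, the counts of walks from $v$ that end at (or visit) each vertex; we store them in hash tables. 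By \cref{fact:undirected-commute-er}, commute times are then $2m\ER$ with $m$ maintained explicitly. The theorem therefore follows once two things are shown: the walks stay distributed exactly as independent random walks in the current graph after every update, and each update costs $\tO(\cdot)$ expected time. Since the static guarantee is with high probability and per time step, a union bound over polynomially many updates handles correctness.

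\textbf{Step 1 (resampling rule).} Suppose edge $(a,b)$ is inserted or deleted. The only walks whose law changes are those passing through $a$ or $b$, and only at the steps where they leave these vertices. A step out of $a$ in a lazy walk moves along a uniformly chosen incident edge. When the degree changes from $d_a$ to $d_a\pm1$, we couple the old and new uniform choices. For an insertion, each departure from $a$ that moved is redirected to the new edge with probability $1/(d_a+1)$. For a deletion, only the departures that used the deleted edge are redirected, and they are resampled uniformly among the remaining edges. At the first redirected step of a walk, we regenerate the suffix of that walk from scratch. This keeps each walk exactly distributed as a fresh walk in the new graph. To locate the affected walks quickly, each vertex $x$ keeps an index of the (walk, step) pairs visiting it, and each edge keeps an index of its traversals. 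The number of walks we resample is then about a $1/d_a$ fraction of the visits to $a$; we can sample which ones directly (for example, with geometric skipping) instead of scanning all of them.

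\textbf{Step 2 (congestion bound, the main obstacle).} We need the expected number of visits to $a$, over all walks, to be $O(d_a\cdot nR\ell/m)$, or at least $O(d_a R\ell)$. Then the expected number of resampled walks is $O(R\ell)$ per update, each costing $O(\ell)$ to regenerate, plus $O(\ell)$ hash-table updates to fix the statistics. The key fact is that the starting distribution, which is uniform over vertices, is dominated entrywise by $\frac{n}{2m}\cdot\min_v\frac{1}{d_v}\cdot\ldots$; concretely, $\ee_v\le d_v\cdot\frac{\DD\vone}{d_v}$ entrywise, so $\sum_v\ee_v\le\DD\vone$, since $d_v\ge1$. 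The vector $\DD\vone$ is stationary for $\PP$, so the walks started from all vertices put expected mass at most $(\PP^t\DD\vone)_a=d_a$ on $a$ at each step $t$. Summing over $t<\ell$ and over the $R$ copies gives an expected congestion of at most $R\ell d_a$, which is exactly what we need.

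The delicate part is that the current walks were produced by a history of adaptive resamplings. Step 1, however, guarantees that they are exactly fresh random walks on the current graph, which are independent of the history as long as the updates are oblivious. The congestion bound therefore holds in expectation at each update. The resulting update time is then $O(R\ell\cdot\ell\cdot\ell)$, with the extra factor of $\ell$ coming from maintaining prefix statistics for all lengths $k$. Tracking the $\gamma$ and log factors, $R\ell^3 = \gamma^{-4}\eps^{-2}\log n\log^4\frac{n}{\eps}$, which matches the claim. Preprocessing consists of reading the graph in $O(m)$ time and generating the $nR\ell$ walks with their statistics, at $O(\ell)$ per walk for each of the $\ell$ lengths, giving $O(nR\ell^2) = O(n\gamma^{-3}\eps^{-2}\log n\log^3\frac{n}{\eps})$ time and space.

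The expander assumption is needed only to invoke \cite{LiS23} at query time. The graph has to remain a $\gamma$-expander after each update, which is part of the hypothesis.
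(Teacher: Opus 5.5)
This is essentially the paper's proof: you invoke \cite{LiS23} as a black box with the same $\ell$ and $R$, use the same coupling repair (redirect each non-lazy departure from an endpoint with probability $1/(d_a+1)$ on insertion, resample traversals of a deleted edge among the remaining edges, regenerate from the earliest change), and bound congestion the same way via $\vone\le\dd=\PP^t\dd$. The $\ell$-bookkeeping in Step 2 needs fixing, though. With $R$ independent walks for each length $k<\ell$ (the form the cited guarantee covers; prefixes of $R$ length-$\ell$ walks are correlated across lengths, so they are not ``equivalent''), the congestion at $a$ is $R\sum_{k<\ell}\sum_{t\le k}(\PP^t\vone)_a=O(R\ell^2 d_a)$, not $O(R\ell d_a)$, so an update triggers $O(R\ell^2)$ expected resamples at $O(\ell)$ each; the third factor of $\ell$ comes from this count, not from ``prefix statistics'', since only the aggregated sparse vector $\widetilde{\hh}^{(\ell)}_v$ is needed, it changes in $O(1)$ entries per resampled walk, and storing per-$k$ counts instead would make each query cost $O(\ell)$ rather than $O(1)$.
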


Our dynamic data structure is based on a fully dynamic algorithm for maintaining short random walks.
For the remainder of the proof, we first show that for unweighted undirected graphs (not necessarily expanders), there is an efficient data structure for maintaining uniformly sampled short random walks starting from each vertex.
We then apply this dynamic random walk data structure on expanders to prove \Cref{thm:dyner_formal}.

\subsection{Fully Dynamic Short Random Walks}

We define $(\ell,s)$ random walks as follows: each vertex has $s$ uniform lazy random walks of length $k$ for all $k \in \{0,\ldots,\ell-1\}$.
We show that a uniform sample of such $(\ell,s)$ random walks can be maintained efficiently.
Our main algorithm is presented in Algorithm~\ref{alg:dyn_rw}.
Note that this algorithm also applies for directed graphs.

\begin{algorithm2e}[t]
\caption{Dynamic $(\ell,s)$ random walks}
\label{alg:dyn_rw}
\DontPrintSemicolon
\Pro{\Initialize{$G=(V,E), (\ell,s)$}}{
\For{each vertex $u$ and each length-index pair $(k,i)\in\{0,\ldots,\ell-1\}\times[s]$}{
    Perform a length $k$ random walk $\ww$ starting from $u$\;
    Record the walk as incident to every vertex it visits and every edge it traverses only at non-lazy steps\;
}
}

\Pro{\Insert{$e=(v,w)$}}{
\For{each endpoint $a$ from which $e$ can be traversed ($v$ if directed, both endpoints if undirected)}{
For each instance that a walk is incident with $a$, sample this instance (walk and position) with probability $\frac{1}{d_a+1}$\;
If a walk is sampled multiple times, keep the instance with the earliest position only\;
\For{each remaining instance}{
    Take a step of edge $e$ from this position at $a$\;
    Resample the remaining lazy-walk suffix from the other vertex of $e$\;
    Update vertex and edge incidences\;
}
}
}

\Pro{\Delete{$e=(v,w)$}}{
\For{each walk incident to $e$ (in either direction if undirected)}{
    Regenerate the portion of the walk from the first position where $e$ is traversed, with the first step non-lazy\;
    Update vertex and edge incidences of the walk\;
}
Clear the incidence data structure for $e$\;
}
\end{algorithm2e}

\begin{lemma}[Undirected fully dynamic $(\ell,s)$-random walks]
\label{lemma:dyn_rw}
The data structure in Algorithm~\ref{alg:dyn_rw} maintains a set of uniform $(\ell,s)$-random walks for a fully dynamic undirected graph in $O(s\ell^3)$ expected time and $O(s\ell^2)$ expected recourse per update.
It has preprocessing time and space usage of $O(m + ns\ell^2)$.
\end{lemma}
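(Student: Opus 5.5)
The proof has two parts: correctness (the maintained walks remain uniformly distributed after each update) and cost (expected time and recourse).

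\textbf{Correctness.} We argue by induction on updates, showing that after each update every walk is distributed exactly as a fresh lazy random walk of its length in the new graph. Walks are independent and each is handled separately, so it suffices to treat one walk $\ww$ of length $k$ from $u$.

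For a deletion of $e$, consider the first position where $\ww$ traverses $e$. The prefix before that position has, conditioned on not yet using $e$, the law of a walk in $G\setminus e$ at each step. At a vertex $a$ with $d_a$ edges, the old step took $e$ with probability $\frac{1}{2d_a}$, and this step is replaced by a fresh step that is forced to be non-lazy. A fresh step in $G\setminus e$ should be lazy with probability $1/2$ and take each remaining edge with probability $\frac{1}{2(d_a-1)}$. The composed kernel is: lazy with probability $1/2$; each other edge with probability $\frac{1}{2d_a}+\frac{1}{2d_a}\cdot\frac{1}{d_a-1}=\frac{1}{2(d_a-1)}$; and edge $e$ never. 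This matches. The prefix is unchanged and the suffix is freshly resampled, so a step-by-step Markov-chain coupling gives the correct law.

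For an insertion at $a$, each visit of $\ww$ to $a$ at a non-final position is independently redirected through $e$ with probability $\frac{1}{d_a+1}$. We then keep only the earliest redirection and resample everything after it. We check the kernel at each step. A non-lazy old step at $a$ uses each old edge with probability $\frac{1}{2d_a}$. After redirection, the step is non-lazy along $e$ with probability $\frac{1}{2}\cdot\frac{1}{d_a+1}$, which is wrong: it should be $\frac{1}{2(d_a+1)}$ overall, so the lazy mass must also be accounted for. I will therefore interpret ``instances incident to $a$'' as non-lazy departures from $a$ only. Under this reading, the new edge receives $\frac{1}{2d_a}\cdot d_a\cdot\frac{1}{d_a+1}=\frac{1}{2(d_a+1)}$, and each old edge keeps $\frac{1}{2d_a}\cdot\frac{d_a}{d_a+1}=\frac{1}{2(d_a+1)}$. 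Fixing this convention precisely is the main obstacle: ``sample and keep the earliest'' must coincide with sequentially deciding at each visit, given the unchanged prefix, which is true by the independence of the coins. Undirected insertions act at both endpoints; I handle this by processing the two endpoints sequentially, treating the graph after the first endpoint as an intermediate, directed-half-inserted chain, or else by a joint coupling argument.

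\textbf{Congestion bound.} Let $N_x$ be the expected number of (walk, position) pairs at vertex $x$, summing over all start vertices, all $k<\ell$ and $s$ copies. This gives
\[
N_x=s\sum_{k<\ell}\sum_{j\le k}\sum_u(\PP^j\ee_u)_x .
\]
For undirected graphs, $\PP^j$ satisfies $\sum_u(\PP^j)_{xu}\frac{d_u}{d_x}$-type reversibility: $(\PP^j)_{xu}d_u=(\PP^j)_{ux}d_x$. Hence
\[
\sum_u(\PP^j)_{xu}=d_x\sum_u(\PP^j)_{ux}/d_u\le d_x,
\]
since $d_u\ge1$ and the column of a stochastic matrix sums to $1$. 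Therefore $N_x\le s\ell^2 d_x$; the key point is that the $\ell^2$ counts lengths times positions.

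\textbf{Cost.} For an insertion, the expected number of redirected instances is at most $N_a/(d_a+1)\le s\ell^2$. Each redirected instance costs $O(\ell)$ to resample the suffix and update the incidences. This gives expected recourse $O(s\ell^2)$ and time $O(s\ell^3)$. Enumerating the incidences at $a$ is done lazily by sampling, e.g.\ with skip sampling using geometric gaps, so we avoid paying for all $N_a$ instances.

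For a deletion, the expected number of traversals of $e=(v,w)$ is
\[
\frac{N_v}{2d_v}+\frac{N_w}{2d_w}\le s\ell^2,
\]
and again each costs $O(\ell)$. Both insertion and deletion bounds use the congestion bound in the current graph, which is valid because correctness keeps the walks uniform.

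\textbf{Preprocessing and space.} Reading the graph costs $O(m)$. Generating the walks costs $ns\sum_k k=O(ns\ell^2)$ steps. The incidence lists are stored in hash tables, which gives the claimed preprocessing time and space.
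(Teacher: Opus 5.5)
Your proposal is correct and follows essentially the same argument as the paper: induction on uniformity, redirecting non-lazy departures from each endpoint with probability $\frac{1}{d_a+1}$ and keeping the earliest (which is exactly what the algorithm's ``incident only at non-lazy steps'' recording already means, so no reinterpretation is needed), replacing the first traversal of a deleted edge by a forced non-lazy uniform step, and the congestion bound $\sum_u(\PP^j)_{xu}\le d_x$ (your reversibility argument is equivalent to the paper's $\PP^t\vone\le\PP^t\dd=\dd$). For the two endpoints of an insertion, commit to the joint option, which is what the paper does: sample the instances at both $v$ and $w$ from the old walks before resampling any suffix. In the sequential variant, the second endpoint's instances follow a non-reversible intermediate chain, so the bound $N_w\le s\ell^2 d_w$ that your cost analysis relies on is not available.
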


For a vertex $v\in V$, its \emph{vertex congestion} is defined to be the total number of times a random walk visits $v$.
Similarly, for an edge $e\in E$, its \emph{edge congestion} is the total number of times a random walk traverses $e$.
Note that for undirected graphs, the traversal can happen in both directions.

These congestion measures on vertices and edges are crucial to the control on the update time.
Intuitively, the less congestion, the fewer walks are affected by the insertion or deletion of an edge.
We demonstrate below that for unweighted graphs, these measures are small.
\begin{lemma}[Undirected vertex congestion]
    \label{lemma:und_cong}
For $(\ell,s)$-random walks on an undirected graph with $n$ vertices, in expectation, the congestion for a vertex $v \in V$ is $O(s\ell^2 d_v)$ and for any edge is $O(s \ell^2)$.
\end{lemma}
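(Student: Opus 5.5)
Write the expected congestion of $v$ as a sum over all walks and all time steps, and control each term with the reversibility of the lazy walk on an undirected graph. For a walk of length $k$ starting at $u$, the expected number of visits to $v$ is $\sum_{t=0}^{k}(\PP^t)_{vu}$. Summing over the $s$ copies, all starting vertices $u$ and all lengths $k<\ell$ gives
\[
\E[\text{cong}(v)] = s\sum_{k=0}^{\ell-1}\sum_{t=0}^{k}\sum_{u\in V}(\PP^t)_{vu}.
\]

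The key step is to bound the inner row sum $\sum_u (\PP^t)_{vu}$. For the lazy walk $\PP=\frac12(\II+\AA\DD^{-1})$ on an undirected graph, detailed balance holds with respect to the degrees. Equivalently, $\PP^t\DD$ is symmetric, so $(\PP^t)_{vu}d_u=(\PP^t)_{uv}d_v$ and hence $(\PP^t)_{vu}=\frac{d_v}{d_u}(\PP^t)_{uv}$. Because the graph is unweighted and $u$ is not isolated (if it were, the $u=v$ term is trivially at most $1\le d_v$), we have $d_u\ge1$. Therefore
\[
\sum_u(\PP^t)_{vu}\le d_v\sum_u(\PP^t)_{uv}=d_v,
\]
using that column $v$ of $\PP^t$ is a probability distribution. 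The triangle sum has $O(\ell^2)$ pairs $(k,t)$, so $\E[\text{cong}(v)]=O(s\ell^2 d_v)$.

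For an edge $e=(a,b)$, count traversals in each direction separately. A traversal $a\to b$ at step $t$ requires the walk to be at $a$ at time $t$ and then take the non-lazy step along $e$, which happens with probability $\frac{1}{2d_a}$. Summing over starts, lengths and steps as above, the expected number of $a\to b$ traversals is at most $\frac{1}{2d_a}\cdot O(s\ell^2 d_a)=O(s\ell^2)$. The same holds for $b\to a$, giving $O(s\ell^2)$ in total; multi-edges are handled per copy in the same way.

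The main obstacle is the reversal step, which must be applied carefully. It relies on undirectedness and on $d_u\ge1$, which is where unweightedness enters; weighted degrees could be tiny. I would also double-check the self-loop and multigraph conventions so that $\PP^t\DD$ remains symmetric. The rest is bookkeeping via linearity of expectation, which also explains why the paper later notes that this congestion bound fails for general directed graphs without uniformity.
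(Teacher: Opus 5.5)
Your proof is correct and follows the paper's argument: linearity of expectation reduces the vertex bound to $(\PP^t\vone)_v\le d_v$, and the edge bound comes from the $\frac{1}{2d_a}$ traversal probability at each endpoint, exactly as in the paper. The only difference is how you obtain $(\PP^t\vone)_v\le d_v$. You use detailed balance together with column-stochasticity of $\PP^t$, whereas the paper uses $\PP^t\vone\le\PP^t\dd=\dd$, which relies only on nonnegativity, invariance of $\dd$, and $\dd\ge\vone$ rather than reversibility; that is precisely the form reused for the directed $\kappa$-uniform bound in \Cref{lemma:dir_cong}.
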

\begin{proof}
Let $\dd$ be the degree vector. Since $\PP\dd=\dd$ and
$\dd\ge\vone$ coordinatewise, the nonnegativity of $\PP$ implies that, for
every $t\ge0$,
\[
    \PP^t\vone\le\PP^t\dd=\dd.
\]
Therefore, the expected number of visits to a vertex $v$ is
\begin{align*}
    s\sum_{u\in V}\sum_{k=0}^{\ell-1}\sum_{t=0}^{k}
        (\PP^t\ee_u)_v
    &=s\sum_{k=0}^{\ell-1}\sum_{t=0}^{k}(\PP^t\vone)_v \\
    &\le O(s\ell^2d_v).
\end{align*}

At any occurrence of $v$ before the last step, a lazy walk traverses any
fixed incident edge with probability $\frac{1}{2d_v}$.
Hence, for an undirected
edge $e=\{v,w\}$, summing over its two traversal directions gives expected
edge congestion at most
\[
    \frac{1}{2d_v}O(s\ell^2d_v)
    +\frac{1}{2d_w}O(s\ell^2d_w)
    =O(s\ell^2).
\]
\end{proof}

We are ready to prove \Cref{lemma:dyn_rw}.
\begin{proof}[Proof of \Cref{lemma:dyn_rw}]
For the data structure that maintains incident walks for vertices and edges, we use dynamic arrays and hash tables supporting entry insertion and deletion in $O(1)$ expected time.
It also supports, by a fast sampling algorithm (e.g.~\cite{Knuth97,Devroye06,BringmannP12}), sampling each entry of the incident set independently with a fixed probability $p\in(0,1]$ in time proportional to the sampled subset, which is $O(p|I|+1)$ in expectation for the incident set $I$ of size $|I|$.

For initialization, the total number of sampled transitions is
$O(ns\ell^2)$.  Sampling a transition and recording its incidences take expected $O(1)$ time, so storing all walks takes $O(ns\ell^2)$ time and space. 
Reading and storing the initial graph takes an additional $O(m)$ time and space.

Assume inductively that all maintained walks are mutually independent and distributed as fresh lazy random walks in the current graph. 
The next update is independent of the data structure's random choices.

\paragraph{Edge insertions.}
Let $e=\{v,w\}$ be inserted, and fix an endpoint $a\in\{v,w\}$. 
Let $d_a$ be its degree before the insertion. 
Conditional on a transition from $a$ being non-lazy, each old incident edge has probability $1/d_a$. 
Our algorithm keeps the old
edge with probability $d_a/(d_a+1)$ and otherwise uses $e$.  Every old edge and the new edge then have probability $1/(d_a+1)$ as required, while leaving the laziness coin unchanged.

We first sample all incidences to be repaired from the old walks before resampling any suffix.
If a walk contains several selected incidences, its earliest one is the first position where the old and new walks disagree and it suffices to only resample after taking a step with edge $e$.
By the vertex congestion bound in \Cref{lemma:und_cong}, the expected number of resamples is then at most
\[
    \sum_{a\in\{v,w\}}\frac{1}{d_a+1}\,O(s\ell^2d_a)
    =O(s\ell^2).
\]
Resampling a suffix and updating its incidences take
$O(\ell)$ time, giving $O(s\ell^2)$ expected recourse and
$O(s\ell^3)$ expected update time.

\paragraph{Edge deletions.}
Let $e=\{v,w\}$ be deleted.  A walk is modified if and only if it traverses $e$, and only its earliest traversal needs to be retained. 
Suppose this traversal leaves a vertex $a$ of degree $d_a$ before deletion. The first resampling step is non-lazy and chooses a remaining edge uniformly since the previous laziness coin is still valid.
For every remaining edge, the resulting conditional probability is
\[
    \frac{1}{d_a}
    +\frac{1}{d_a}\cdot\frac{1}{d_a-1}
    =\frac{1}{d_a-1}.
\]
Thus, keeping the prefix and independently resampling after the replacement again gives fresh, mutually independent walks in the updated graph.

By the edge congestion bound in \Cref{lemma:und_cong} above, the expected number of resamples is $O(s\ell^2)$.  
The $O(\ell)$ suffix repair cost gives $O(s\ell^2)$ expected recourse and $O(s\ell^3)$ expected update time. 
\end{proof}

\subsection{Proof of \Cref{thm:dyner_formal}}
The effective resistance between a pair of vertices can be computed exactly through the difference hitting vector $\hh_{u,v}$.
\begin{proposition}\label{prop:er_id}
\[
    \ER_G(u,v)=\frac{1}{2}(\ee_u-\ee_v)^\top\DD^{-1}\hh_{u,v}
\]
\end{proposition}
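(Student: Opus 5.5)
We derive the identity from the standard pseudoinverse formula $\ER_G(u,v)=(\ee_u-\ee_v)^\top\LL^{+}(\ee_u-\ee_v)$ by showing that $\frac12\DD^{-1}\hh_{u,v}$ is a potential vector solving the Laplacian system with demand $\ee_u-\ee_v$.

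\textbf{Step 1: $\hh_{u,v}$ is well defined.} With $\PP=\frac12(\II+\AA\DD^{-1})$ we have $\PP^k(\ee_u-\ee_v)\to\ppi(\vone^\top(\ee_u-\ee_v))=\vzero$ geometrically fast, since $G$ is connected. Lazy walks are aperiodic, so for a connected graph the nontrivial eigenvalues of $\PP$ lie in $[0,1)$. Hence the series $\sum_{k\ge0}\PP^k(\ee_u-\ee_v)$ converges. Moreover, $\vone^\top\PP=\vone^\top$, so every term has zero sum and $\vone^\top\hh_{u,v}=0$.

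\textbf{Step 2: telescoping.} Convergence gives
\[
(\II-\PP)\hh_{u,v}=\lim_{K\to\infty}\bigl(\ee_u-\ee_v-\PP^{K}(\ee_u-\ee_v)\bigr)=\ee_u-\ee_v .
\]
Since $\II-\PP=\frac12(\II-\AA\DD^{-1})=\frac12\LL\DD^{-1}$, this becomes
\[
\LL\Bigl(\tfrac12\DD^{-1}\hh_{u,v}\Bigr)=\ee_u-\ee_v .
\]
So $\xx\defeq\frac12\DD^{-1}\hh_{u,v}$ solves $\LL\xx=\ee_u-\ee_v$.

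\textbf{Step 3: identify with $\ER$.} Because $G$ is connected, $\ker\LL=\mathrm{span}(\vone)$. Therefore $\xx=\LL^{+}(\ee_u-\ee_v)+c\vone$ for some scalar $c$. Pairing with $\ee_u-\ee_v$ kills the $c\vone$ term, since $(\ee_u-\ee_v)^\top\vone=0$. This gives
\[
(\ee_u-\ee_v)^\top\xx=(\ee_u-\ee_v)^\top\LL^{+}(\ee_u-\ee_v)=\ER_G(u,v),
\]
which is exactly the claimed identity.

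The only mildly delicate point is Step 1: justifying convergence of the series and the limit interchange in Step 2. This reduces to the spectral fact that the lazy walk on a connected undirected graph has $\PP^k\to\ppi\vone^\top$, which is the same argument as \Cref{prop:expander-mixing}. Everything else is linear algebra.
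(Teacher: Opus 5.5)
Your proposal is correct and follows the paper's proof essentially step for step: show $(\II-\PP)\hh_{u,v}=\ee_u-\ee_v$, rewrite $\II-\PP=\frac12\LL\DD^{-1}$ so that $\frac12\DD^{-1}\hh_{u,v}$ solves $\LL\xx=\ee_u-\ee_v$, and then use that every such solution differs from $\LL^{\dagger}(\ee_u-\ee_v)$ only by a multiple of $\vone$, which is orthogonal to $\ee_u-\ee_v$. The only difference is that you justify the convergence of the series and the telescoping limit explicitly, whereas the paper simply asserts convergence from $\vone^\top(\ee_u-\ee_v)=0$.
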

\begin{proof}
Set $\bb=\ee_u-\ee_v$.  Since $\vone^\top\bb=0$, $\hh_{u,v}$ converges and $(\II-\PP)\hh_{u,v}=\bb$.
Moreover,
\[
    \II-\PP
    =\frac{1}{2}(\II-\AA\DD^{-1})
    =\frac{1}{2}\LL\DD^{-1}.
\]
Thus $\frac{1}{2}\DD^{-1}\hh_{u,v}$ is a solution to $\LL\xx=\bb$.
Every such solution differs from $\LL^\dagger\bb$ by a multiple of $\vone$, which is orthogonal to $\bb$. Therefore,
\[
    \frac{1}{2}\bb^\top\DD^{-1}\hh_{u,v}
    =\bb^\top\LL^\dagger\bb
    =\ER_G(u,v).
\]
\end{proof}

For a sampled set of $(\ell,s)$ random walks, we define $\widetilde{\hh}_v^{(\ell)}$ to be the estimator of $\hh_v^{(\ell)}$
and set 
\[
    \ssig_v = \widetilde{\hh}^{(\ell)}_v - \ell \cdot \ppi,
\]
where $\ppi = \frac{1}{\sum_v d_v} \dd$ is the stationary vector.

Given as query a pair of vertices $(v,w)$ the algorithm of \cite{LiS23} simply returns
\begin{equation}
    \frac{1}{2}\left(
    \frac{1}{d_v} \ssig_v(v) - \frac{1}{d_w} \ssig_v(w)
    - \frac{1}{d_v} \ssig_w(v) + \frac{1}{d_w} \ssig_w(w)
    \right).
\end{equation}
Since the stationary terms cancel in this expression,
our implementation stores only the sparse hitting vectors and uses the stationary term only in the approximation analysis.
The overall static construction is stated in Algorithm~\ref{alg:er_static}.

\begin{algorithm2e}[t]
\caption{Static all-pairs effective resistances}
\label{alg:er_static}
\Pro{\Initialize{$G=(V,E), (\ell,s)$}}{
    \For{$v \in V$}{
        $\widetilde{\hh}^{(\ell)}_v \gets \vzero$\;
        \For{$k \in \{0,\ldots,\ell-1\}, i \in [s]$}{
            perform a length-$k$ random walk starting from $v$, ending at vertex $w$\;
            $\widetilde{\hh}^{(\ell)}_v \gets \widetilde{\hh}^{(\ell)}_v + \frac{1}{s} \ee_w$\;
        }
    }
}
\end{algorithm2e}

\begin{proof}[Proof of \Cref{thm:dyner_formal}]
By \cite{LiS23}, it suffices to choose
$\ell=O(\gamma^{-1}\log\frac{n}{\eps})$
and $s=O(\eps^{-2}\gamma^{-1}\log n \log\frac{n}{\eps})$.
By \Cref{lemma:dyn_rw}, the random walks can be initialized in time and space \[
    O(m+ns\ell^2)
    =O\left(m+n\gamma^{-3}\eps^{-2}\log n
    \log^3\frac{n}{\eps}\right).
\]

For each vertex $v$, store the sparse hitting vector
$\widetilde{\hh}^{(\ell)}_v$. 
Whenever a maintained walk changes, subtract $\frac{1}{s}$ from its old terminating location and add $\frac{1}{s}$ to its new hitting location.
Delete entries when they become zero.
By the recourse bound in \Cref{lemma:dyn_rw}, this costs $O(s\ell^2)$ expected time per update and is
dominated by the random-walk update time
\[
    O(s\ell^3)
    =O\left(\gamma^{-4}\eps^{-2}\log n
    \log^4\frac{n}{\eps}\right).
\]

The maintained walks remain uniform on the current graph.  Thus, the static guarantee of \cite{LiS23} gives the claimed approximation at any fixed time with very high probability.

Each hitting vector is stored in a hash table. 
A query reads four coordinates in these vectors and two current degrees and therefore takes expected $O(1)$ time. 
For commute time queries, return $2m$ times the effective resistance estimate, where $m=|E|$ is maintained in $O(1)$ time per update.
The sampled hitting vectors use $O(ns\ell)$ space, which is dominated by the $O(ns\ell^2)$ space for the walks and their incidences.
\end{proof}

\section{Conditional Hardness for \texorpdfstring{$\kappa$}{kappa}-Mixing Graphs}
\label{sec:hardness}

In this section, we state and formally prove our hardness result for approximate dynamic stationary distribution data structure under the assumption of the OMv conjecture.

\begin{theorem}[Hardness for fast mixing directed graphs]
\label{thm:mixing-hardness}
Assume the OMv conjecture. For every constant $\delta>0$, there is no randomized fully-dynamic data structure with polynomial preprocessing time that maintains an $n$-vertex unweighted directed $O(\log n)$-mixing multigraph of maximum out-degree four, answers $(1+0.5)$ entry-wise approximate stationary queries, and has both amortized update time and
query time $O(n^{1/2-\delta})$ with probability at least $\frac{2}{3}$. 
\end{theorem}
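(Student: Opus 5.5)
We reduce OMv to the dynamic stationary-query problem, following the construction sketched in the overview. It is convenient to go through the standard OMv-to-OuMv-style phase structure of \cite{HenzingerKNS15}. Given an $n\times n$ boolean matrix $\MM$, we build a directed multigraph on $N=O(n^2)$ vertices with maximum out-degree four. It has a source $s$, a vector layer, a matrix layer and output vertices $r_1,\dots,r_n$, and every directed $s\to r_i$ path in the main component passes through one edge encoding $\vv_j=1$ and one encoding $\MM_{ij}=1$. Since out-degree is bounded, the fan-outs of $s$ to the $n$ vector vertices and of each vector vertex $c_j$ to the rows $i$ with $\MM_{ij}=1$ are realized by binary trees. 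The $O(n^2)$ total edges come from the matrix layer, encoded by $n$ trees of size $O(n)$. For each new vector $\vv$ we toggle the $O(n)$ vector-layer edges, query $\ppi_{r_i}$ for all $i$, and threshold each answer to recover $[\MM\vv]_i$. This costs $O(n)$ updates and $O(n)$ queries per vector, hence $O(n^2)$ operations over all $n$ vectors. If each operation costs $O(N^{1/2-\delta})=O(n^{1-2\delta})$, the total time is $O(n^{3-2\delta})$ plus polynomial preprocessing. We must ensure the preprocessing is $O(n^{3-\delta'})$; the standard trick is to apply the reduction with polynomially smaller blocks, or to invoke the OuMv variant in which preprocessing time is allowed. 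The high-probability, per-query correctness is amplified by repetition, since $O(\log n)$ independent copies with a median are affordable.

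\textbf{Mixing and strong connectivity.} Every non-source vertex receives reset edges to $s$ so that, as a multigraph, at least a $1/4$ fraction of its out-edges lead to $s$. A node with two real children plus two reset edges has out-degree four, which is why the bound is four. Then two walks coupled through the lazy coin and the reset choice meet at $s$ with probability at least a constant per step. This gives $|(\PP^\kappa)_{vu}-\ppi_v|\le (1-c)^\kappa\le n^{-10}$ for $\kappa=O(\log n)$, independent of $\MM$ and $\vv$. Strong connectivity requires every vertex to be reachable from $s$, including the $r_i$ with $[\MM\vv]_i=0$ and the dead-end tree nodes. For this we add an auxiliary component: a chain of length $L=\Theta(\log n)$ from $s$ in which every chain vertex also has reset edges, followed by a tree reaching all main-component vertices. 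Any walk from $s$ reaches the auxiliary tree with probability at most $(3/4)^{L}$, which we make $\le n^{-c}$ for a large constant $c$.

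\textbf{Gap in $\ppi_{r_i}$.} This is the main obstacle: we must compute $\ppi_{r_i}$ precisely enough to show a factor-greater-than-$(1.5)^2$ gap. We write $\ppi_{r_i}=\ppi_s\cdot \E[\#\text{visits to } r_i \text{ before returning to } s]$ via the renewal (cycle) formula. Since every vertex returns to $s$ with probability $\ge 1/4$ per step, excursions have $O(1)$ expected length, so $\ppi_s=\Theta(1)$. If $[\MM\vv]_i=1$, some path of length $O(\log n)$ from $s$ to $r_i$ exists in the main component, with each step followed with probability $\ge 1/8$. This gives only a $n^{-O(1)}$ lower bound, not a constant. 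The fix is to make the gap multiplicative against the auxiliary leakage: when no main path exists, visits to $r_i$ occur only via the auxiliary component, so they are at most $n^{-c}$. When a path exists, they are at least $n^{-c'}$ with $c'$ determined by the tree depths. Choosing the chain length $L$ so that $c>c'+1$ gives a gap of $n^{\Omega(1)}\gg 2.25$. Thus a $(1+0.5)$-approximation distinguishes the two cases with threshold $\sqrt{n^{-c}n^{-c'}}$, completing the reduction and contradicting \cref{conj:omv}.
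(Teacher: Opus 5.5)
Your proposal is correct and follows the paper's proof: the same reset-edge construction with coupling at $s$ for $O(\log n)$-mixing, a reset-laden chain from $s$ into an auxiliary out-tree for strong connectivity, thresholding $\ppi_{r_i}$ through the cycle formula $\ppi_{r_i}=\ppi_s\cdot\E[\#\text{visits to } r_i \text{ before returning to } s]$, median amplification, and splitting the OMv matrix into $q\times q$ blocks to absorb polynomial preprocessing (the paper carries this out with $q=\Theta(n^{1/(2c)})$ against $O(q^c)$ preprocessing). The only difference is minor: you lengthen the chain to $L=C\log n$ so that crude $O(1)$ bounds already give an $n^{\Omega(1)}$ gap, whereas the paper fixes the chain at length $\ceil{2\log q}+4$ and computes explicit constants ($\frac{1}{75q^5}$ versus $\frac{1}{9q^5}$); either choice suffices.
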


To prove the theorem above, we give a simple reduction from boolean matrix vector products to a fast mixing directed graph of roughly the same size.
An entry of the product can be determined with good probability of success by checking if the stationary distribution of the corresponding vertex crosses a certain threshold.

\subsection{From Matrix Vector Product to Fast Mixing Graphs}

\begin{figure}[t]
\centering
\begin{tikzpicture}[scale=1.3,>=Stealth,
  vtx/.style={circle, draw, inner sep=1pt, minimum size=14pt, font=\small},
  lab/.style={font=\scriptsize},
  vec-edge/.style={->, very thick, blue!70!black},
  mat-edge/.style={->, very thick, red!70!black},
  rst/.style={->, dashed, gray!80!black}
]
\node[vtx] (s) at (0,0) {$s$};
\draw[->] (0.33,0.06) -- (1.5,1.35);
\draw[->] (0.33,-0.06) -- (1.5,-1.35);
\draw (1.5,1.35) -- (1.5,-1.35);
\node[vtx] (c1) at (1.9,1) {$c_1$};
\node[vtx] (cj) at (1.9,0) {$c_j$};
\node[vtx] (cq) at (1.9,-1) {$c_q$};
\node at (1.9,0.52) {$\vdots$};
\node at (1.9,-0.48) {$\vdots$};
\draw[->] (1.5,1) -- (c1);
\draw[->] (1.5,0) -- (cj);
\draw[->] (1.5,-1) -- (cq);
\node[vtx] (x1) at (3.3,1) {$x_1$};
\node[vtx] (xj) at (3.3,0) {$x_j$};
\node[vtx] (xq) at (3.3,-1) {$x_q$};
\node at (3.3,0.52) {$\vdots$};
\node at (3.3,-0.48) {$\vdots$};
\draw[vec-edge] (c1) -- (x1);
\draw[vec-edge] (cj) -- (xj);
\draw[vec-edge] (cq) -- (xq);
\node[lab, blue!70!black] at (2.6,1.55) {vector edge iff $\vv_j = 1$};
\draw[->] (3.63,1.05) -- (4.9,1.45);
\draw[->] (3.63,0.95) -- (4.9,0.55);
\draw (4.9,1.45) -- (4.9,0.55);
\draw[->] (3.63,0.05) -- (4.9,0.45);
\draw[->] (3.63,-0.05) -- (4.9,-0.45);
\draw (4.9,0.45) -- (4.9,-0.45);
\draw[->] (3.63,-0.95) -- (4.9,-0.55);
\draw[->] (3.63,-1.05) -- (4.9,-1.45);
\draw (4.9,-0.55) -- (4.9,-1.45);
\node[vtx] (m11) at (5.4,1) {$m_{1,1}$};
\node[vtx] (mij) at (5.4,0) {$m_{i,j}$};
\node[vtx] (mqq) at (5.4,-1) {$m_{q,q}$};
\node at (5.4,0.52) {$\vdots$};
\node at (5.4,-0.48) {$\vdots$};
\draw[->] (4.9,1) -- (m11);
\draw[->] (4.9,0) -- (mij);
\draw[->] (4.9,-1) -- (mqq);
\node[vtx] (y11) at (6.9,1) {$x_{1,1}$};
\node[vtx] (yij) at (6.9,0) {$x_{i,j}$};
\node[vtx] (yqq) at (6.9,-1) {$x_{q,q}$};
\node at (6.9,0.52) {$\vdots$};
\node at (6.9,-0.48) {$\vdots$};
\draw[mat-edge] (m11) -- (y11);
\draw[mat-edge] (mij) -- (yij);
\draw[mat-edge] (mqq) -- (yqq);
\node[lab, red!70!black] at (6.15,1.55) {matrix edge iff $\MM_{i,j} = 1$};
\draw (7.3,1.45) -- (7.3,0.55);
\draw[->] (7.3,1.45) -- (7.72,1.07);
\draw[->] (7.3,0.55) -- (7.72,0.93);
\draw (7.3,0.45) -- (7.3,-0.45);
\draw[->] (7.3,0.45) -- (7.72,0.07);
\draw[->] (7.3,-0.45) -- (7.72,-0.07);
\draw (7.3,-0.55) -- (7.3,-1.45);
\draw[->] (7.3,-0.55) -- (7.72,-0.93);
\draw[->] (7.3,-1.45) -- (7.72,-1.07);
\node[vtx] (r1) at (8.25,1) {$r_1$};
\node[vtx] (ri) at (8.25,0) {$r_i$};
\node[vtx] (rq) at (8.25,-1) {$r_q$};
\node at (8.25,0.52) {$\vdots$};
\node at (8.25,-0.48) {$\vdots$};
\draw[->] (7.78,1) -- (r1);
\draw[->] (7.78,0) -- (ri);
\draw[->] (7.78,-1) -- (rq);
\node[vtx] (t) at (10.15,0) {$t$};
\draw (8.7,1.35) -- (8.7,-1.35);
\draw[->] (8.7,1.32) -- (9.88,0.08);
\draw[->] (8.7,-1.32) -- (9.88,-0.08);
\draw[->] (r1) -- (8.7,1);
\draw[->] (ri) -- (8.7,0);
\draw[->] (rq) -- (8.7,-1);
\draw[rst] (yij.north) .. controls (5.8,3.3) and (1.0,3.2) .. (s.north);
\node[lab, gray!80!black] at (3.3,3.45) {reset edges to $s$: $\max\{1,d_{fo}\}$ parallel edges at each $v \neq s$ (only one shown)};
\node[vtx] (sp) at (4.3,-3.45) {$s'$};
\fill (-0.55,-1.1) circle (1.2pt);
\fill (0.1,-2.15) circle (1.2pt);
\fill (1.6,-3.3) circle (1.2pt);
\draw[->] (s.south) -- (-0.55,-1.1);
\draw[->] (-0.55,-1.1) -- (0.1,-2.15);
\node at (0.8,-2.88) {$\cdots$};
\draw[->] (1.6,-3.3) -- (sp);
\node[lab] at (0.55,-3.68) {path of length $2\log q + 4$};
\draw[->] (sp.168) -- (1.9,-2.05);
\draw[->] (sp.12) -- (8.25,-2.05);
\draw (1.9,-2.05) -- (8.25,-2.05);
\draw[->] (1.9,-2.02) -- (cq.south);
\draw[->] (3.3,-2.02) -- (xq.south);
\draw[->] (5.4,-2.02) -- (mqq.south);
\draw[->] (6.9,-2.02) -- (yqq.south);
\draw[->] (8.25,-2.02) -- (rq.south);
\node[lab] at (4.95,-2.62) {out-tree with leaves $V'_{\mathrm{main}}$};
\end{tikzpicture}
\caption{Schematic of the product graph $G_{\MM,\vv}$. Triangles in the main component denote complete binary trees of the indicated depth. Blue edges are the vector edges $(c_j, x_j)$, present iff $\vv_j = 1$; they are the only edges updated by the dynamic algorithm. Red edges are the matrix edges $(m_{i,j}, x_{i,j})$, present iff $\MM_{i,j} = 1$. A path $s \leadsto c_j \to x_j \leadsto m_{i,j} \to x_{i,j} \leadsto r_i \leadsto t$ through a vector edge exists iff $[\MM\vv]_i = 1$. The auxiliary gadget at the bottom makes the graph strongly connected: $s$ reaches $s'$ through a long path, and every product vertex has a copy at a leaf of the tree rooted at $s'$, with an edge from each copy to its original vertex.
Each vertex other than $s$ has reset edges back to $s$, signified by the grey dashed arc. These edges give the rapid mixing property.}
\label{fig:prod-graph}
\end{figure}

Let $q \ge 4$ be a power of two. For $\MM \in \{0,1\}^{q \times q}$ and $\vv \in \{0,1\}^q$, we define the directed multigraph $G_{\MM,\vv}$ as follows (see \cref{fig:prod-graph} for a schematic).

For the main component of the graph, we define for each $i \in [q]$ a path for the product $[\MM\vv]_i$.
Its vertex set consists of a source $s$, a sink $t$, and a collection $V_{\mathrm{main}}$ of \emph{product vertices}:
\begin{itemize}
    \item \emph{column vertices} $c_j, x_j$ for $j \in [q]$, and \emph{row vertices} $r_i$ for $i \in [q]$;
    \item \emph{matrix vertices} $m_{i,j}, x_{i,j}$ for $(i,j) \in [q]\times[q]$;
    \item the internal vertices of four families of complete binary trees, all of depth exactly $\log q$: an out-directed tree rooted at $s$ with leaves $c_1,\ldots,c_q$; an in-directed tree rooted at $t$ with leaves $r_1,\ldots,r_q$; for each $j \in [q]$, an out-directed tree rooted at $x_j$ with leaves $m_{1,j},\ldots,m_{q,j}$; and for each $i \in [q]$, an in-directed tree rooted at $r_i$ with leaves $x_{i,1},\ldots,x_{i,q}$.
\end{itemize}
The graph has the following directed forward edge connections:
\begin{itemize}
    \item the \emph{vector edge} $(c_j,x_j)$ for each $j$ with $\vv_j = 1$;
    \item the \emph{matrix edge} $(m_{i,j},x_{i,j})$ for each $i,j$ with $\MM_{i,j} = 1$.
\end{itemize}
The vector edges within $\{(c_j,x_j) | j \in [q]\}$ are the only edges being updated after receiving a new query vector $\vv$.

To ensure all vertices are reachable from $s$, the following additional \emph{forward connection} vertices and edges are included:
\begin{itemize}
    \item a second source $s'$, connected from $s$ by a path of length $\ceil{2\log q} + 4$;
    \item an out-directed binary tree with leaves at depth $\ceil{\log |V_{\mathrm{main}}|}$ or $\ceil{\log |V_{\mathrm{main}}|}-1$ from $s'$ to a leaf set $V'_{\mathrm{main}}$ consisting of exactly one new copy of each vertex in $V_{\mathrm{main}}$.
    \item each copied leaf has a directed edge to its corresponding vertex in $V_{\mathrm{main}}$.
\end{itemize}
Finally, for strong-connectivity and fast-mixing, we include the following backward \emph{reset} edges directed to the source $s$:
\begin{itemize}
    \item for each vertex $v \ne s,t$, if its forward out-degree is $d_{fo}$, include $\max\{1,d_{fo}\}$ reset edges $(v,s)$;
    \item a single edge $(t,s)$.
\end{itemize}

For any $\MM,\vv$, the graph $G_{\MM,\vv}$ has size roughly equal to the size of the matrix vector product instance and has maximum out-degree of at most 4.
\begin{lemma} \label{lem:omv_size}
For any $\MM \in \{0,1\}^{q \times q}$ and $\vv \in \{0,1\}^q$, the graph $G_{\MM,\vv}$ has $N=\Theta(q^2)$ vertices with out-degree at most $4$ for all vertices.
\end{lemma}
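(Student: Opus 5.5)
The plan is to count the vertices of $G_{\MM,\vv}$ component by component, and then bound the out-degree by looking at each kind of vertex.

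\textbf{Vertex count.} We tally the main component $V_{\mathrm{main}}\cup\{s,t\}$ as follows.
\begin{itemize}
\item The column vertices $c_j,x_j$ and the row vertices $r_i$ contribute $3q$ vertices.
\item The matrix vertices $m_{i,j},x_{i,j}$ contribute $2q^2$ vertices.
\item The two trees rooted at $s$ and $t$ each have $q-1$ internal vertices, including the root. This gives $O(q)$ new vertices, since the roots $s,t$ are already counted.
\item The $q$ trees rooted at $x_j$ and the $q$ trees rooted at $r_i$ each have at most $q-2$ internal non-root vertices. This gives $O(q^2)$ vertices.
\end{itemize}
Hence $|V_{\mathrm{main}}|=\Theta(q^2)$, with the lower bound coming from the $2q^2$ matrix vertices.

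The forward-connection gadget adds the following.
\begin{itemize}
\item The path from $s$ to $s'$ adds $O(\log q)$ vertices.
\item The binary out-tree from $s'$ has $|V'_{\mathrm{main}}|=|V_{\mathrm{main}}|$ leaves, so it has fewer than $2|V_{\mathrm{main}}|$ vertices.
\end{itemize}
Summing everything gives $N=\Theta(q^2)$.

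\textbf{Out-degree bound.} For each vertex $v\neq s,t$, we first bound its forward out-degree $d_{fo}(v)$ by $2$. Its total out-degree is then $d_{fo}(v)+\max\{1,d_{fo}(v)\}\le 4$. The check goes by vertex type.
\begin{itemize}
\item Internal vertices of out-directed binary trees have $2$ children, and this includes the $x_j$ (as roots) and the internal vertices of the $s'$-tree. In the $s'$-tree, leaves sit at two consecutive depths; we should still confirm that no node ends up with more than two children.
\item Internal vertices of in-directed trees have exactly $1$ forward out-edge, to their parent. This covers the leaves $r_i$ (edge toward $t$) and the leaves $x_{i,j}$ (edge toward $r_i$). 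Each $r_i$ is also the root of its own in-tree, but edges of that tree point into $r_i$, so $r_i$ still has forward out-degree $1$.
\item The vertex $c_j$ has at most $1$ forward out-edge, namely the vector edge, and similarly $m_{i,j}$ has at most the matrix edge. Note that $c_j$ is a leaf of the $s$-tree and $m_{i,j}$ is a leaf of the $x_j$-tree, so neither has child edges.
\item Path vertices have $1$ forward out-edge.
\item Copied leaves in $V'_{\mathrm{main}}$ have exactly $1$ forward out-edge, to their original vertex.
\end{itemize}
For the two special vertices: $s$ has out-degree $2$ into its tree plus $1$ into the path, for a total of $3$; and $t$ has only its single reset edge.

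\textbf{Main obstacle.} The only delicate point is checking that no vertex plays two roles whose forward edges add up.
\begin{itemize}
\item $x_j$ is both the head of a vector edge and a tree root. Incoming edges do not count toward out-degree, so this is fine.
\item $r_i$ is both a leaf of the $t$-tree and the root of its own in-tree, which was handled above.
\item $s$ is both a tree root and a path start, which was handled above.
\end{itemize}
Incoming edges from the copied leaves, and in-degrees generally, can be arbitrary but are irrelevant to the bound. Since the statement counts vertices only up to constants, the count above is all that is needed for $N=\Theta(q^2)$; exact counts are unnecessary.
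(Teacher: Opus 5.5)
Your proposal is correct and follows the same route as the paper. You count the main component (your tallies sum to the paper's $4q^2+q-2$), the $O(\log q)$ path, and the $s'$-tree with fewer than $2|V_{\mathrm{main}}|$ vertices, then note that every vertex other than $s$ has forward out-degree at most $2$, so $d_{fo}+\max\{1,d_{fo}\}\le 4$, while $s$ has out-degree $3$ and no reset edges. The one point you left open, that no node of the $s'$-tree has more than two children, is immediate because that tree is binary by construction.
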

\begin{proof}
The source and sink, and the vertices in $V_{\mathrm{main}}$, contain $4q^2+q-2$ vertices. 
A full binary tree rooted at $s'$ with one new leaf for each vertex of $V_{\mathrm{main}}$ has $2|V_{\mathrm{main}}|-1$ vertices. 
The path from $s$ to $s'$ contains $O(\log q)$ vertices.
The total is therefore $\Theta(q^2)$.

Every vertex has at most two non-reset outgoing edges.
The only exception is that $s$ has three. 
Reset edges at most double the out-degree of vertices with positive forward out-degree, and vertices with forward out-degree zero have one reset edge. The maximum out-degree is then $4$.
\end{proof}

The reset edges have two specific consequences. 
First, the reset edges force the graph to have a rather skewed stationary distribution.
For the source vertex $s$, the stationary mass is at least $\frac{1}{3}$; we prove this formally in \Cref{lem:hard_stamass}.
Secondly, since a reset happens with probability at least $\frac{1}{4}$, the random walk mixes very rapidly.
Intuitively, for two starting points, a single random walk step would funnel them to the same location with constant probability.
Therefore, the remaining difference in the stationary mass between the two starting distributions decreases by a constant factor after each step.
We prove this observation formally in \Cref{lem:omv_mixing}.
\begin{lemma} \label{lem:omv_mixing}
The graph $G_{\MM,\vv}$ is $O(\log q)$-mixing.
\end{lemma}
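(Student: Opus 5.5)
Throughout we write $N$ for the number of vertices of $G_{\MM,\vv}$, so $N=\Theta(q^2)$ by \Cref{lem:omv_size}. The plan is a one-step Doeblin (minorization) argument, combined with the standard coupling bound.

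\textbf{Step 1: a common minorizing mass at $s$.} For every vertex $u$ we bound from below the probability that one lazy step from $u$ lands at $s$.
\begin{itemize}
    \item For $u\ne s,t$, at least half of the out-edges of $u$ are reset edges to $s$. This holds because there are $\max\{1,d_{fo}\}\ge d_{fo}$ reset edges. So a lazy step reaches $s$ with probability at least $\frac12\cdot\frac12=\frac14$.
    \item For $u=t$, the only out-edge is $(t,s)$, so the probability is $\frac12$.
    \item For $u=s$, the walk stays at $s$ with probability $\frac12$ through the lazy coin.
\end{itemize}
Hence $(\PP\ee_u)_s\ge \frac14$ for all $u$. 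Equivalently, $\PP\ee_u\ge \frac14\ee_s$ entrywise.

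\textbf{Step 2: contraction.} We use the Doeblin decomposition $\PP=\frac14\ee_s\vone^\top+\frac34\mathbf{Q}$, where $\mathbf{Q}$ is column-stochastic and nonnegative by Step 1. For any two probability vectors $\xx,\xx'$, the first term cancels on $\xx-\xx'$, since $\vone^\top(\xx-\xx')=0$. Therefore
\[\norm{\PP(\xx-\xx')}_1=\tfrac34\norm{\mathbf{Q}(\xx-\xx')}_1\le\tfrac34\norm{\xx-\xx'}_1.\]
Equivalently, couple two walks so that both jump to $s$ together with probability $\frac14$ at each step; once they meet they move together.

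Strong connectivity is needed to make $\ppi$ the unique stationary distribution, and it comes from the construction. Every vertex reaches $s$ by a reset edge, or by $(t,s)$ for $t$. From $s$ one reaches $s'$ along the path, then every copy leaf in the out-tree, then every vertex of $V_{\mathrm{main}}$. The sink $t$ is reachable from the in-tree rooted at $t$, whose leaves $r_i$ are in $V_{\mathrm{main}}$. The internal path and out-tree vertices are reachable from $s$ directly.

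Taking $\xx=\ee_u$ and $\xx'=\ppi$ gives
\[|(\PP^k)_{vu}-\ppi_v|\le\norm{\PP^k\ee_u-\ppi}_1\le 2(3/4)^k.\]

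\textbf{Step 3: choose $k$.} We take $k=C\log q$ for a large enough constant $C$. Then $2(3/4)^k\le N^{-10}$, because $N=\Theta(q^2)$ means $\log N=\Theta(\log q)$. This is exactly \eqref{eq:directed-kappa-mixing} with $\kappa=O(\log q)$.

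The only point needing care is Step 1 at the exceptional vertices $s$ and $t$, and at vertices with zero forward out-degree. The lazy self-loop at $s$ and the single edge $(t,s)$ handle the first two, and the $\max\{1,\cdot\}$ in the reset rule handles the last. Otherwise the argument is routine.
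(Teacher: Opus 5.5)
Your proposal is correct and follows essentially the same route as the paper: a one-step minorization of mass $\frac14$ at $s$ (from the reset edges at $v\neq s,t$, the edge $(t,s)$, and the lazy coin at $s$) gives contraction by $\frac34$ per step, hence $\kappa=O(\log N)=O(\log q)$. Your Doeblin decomposition $\PP=\frac14\ee_s\vone^\top+\frac34\mathbf{Q}$ is the algebraic form of the paper's coupling argument, and your strong-connectivity check (including reachability of $t$ through the $r_i$) is slightly more explicit than the paper's.
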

\begin{proof}
The forward edges give connection from $s$ to every other vertex in the graph, while the reset edges give direct connection from those vertices back to $s$.
Therefore, the graph is strongly connected.

We show lazy random walks on $G_{\MM,\vv}$ mix quickly by showing that each step of the walk makes constant progress towards the stationary distribution using a standard random walk coupling argument.
Since for each vertex other than $s$, at least half of the edges are reset edges, a lazy walk step visits $s$ with probability at least $\frac{1}{4}$.
Also, a walk currently at $s$ stays there with probability $1/2$ by the lazy step.
Hence the one-step random walk distributions from any two vertices have common mass at least $\frac{1}{4}$ at $s$.

We now couple a walk starting from an arbitrary vertex with a walk starting from each vertex with the stationary probability. 
At each step, conditional on not having coupled earlier, the probability that they have not coupled is at most $\frac{3}{4}$.
For $k$ steps, the coupling gives $\frac12\lVert \PP^k\ee_x-\ppi\rVert_1\le (3/4)^k$.
Taking $k\ge 100\log n$ makes the error at most $n^{-10}$ for every entry. 
Since $n=\Theta(q^2)$, this gives $\kappa=O(\log q)$-mixing, proving the lemma.
\end{proof}

Within the main graph, a product entry $[\MM\vv]_i$ is true if and only if there exists a path from the source $s$ to the corresponding vertex $r_i$ of entry $i \in [q]$.
Due to the long path between $s$ and $s'$, where each step resets with probability at least $\frac{1}{4}$, a random walk is much more likely to pick the main path from $s$ to $r_i$ as opposed to going through the path using the binary tree rooted at $s'$. 
Consequently, a higher stationary mass would be placed at $r_i$ compared to the case where no main path from $s$ to $r_i$ exists.
A $(1+0.5)$-approximation to the stationary probability would then suffice for distinguishing the value of $[\MM\vv]_i$.
This is formalized in the following \Cref{lem:hard_stamass}.

\begin{lemma} \label{lem:hard_stamass}
Let $\ppi$ be the stationary distribution of the graph $G_{\MM,\vv}$.
It satisfies that for any $i \in [q]$
\[
\ppi_{r_i}
\begin{cases}
   \le \frac{1}{75q^5} & \text{if } [\MM \vv]_i =0  \\
   \ge \frac{1}{9q^5} & \text{if } [\MM \vv]_i =1  \\
\end{cases}
\]
\end{lemma}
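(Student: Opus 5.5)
We use the renewal structure at $s$. Since every walk repeatedly returns to $s$, the stationary mass of any vertex $u$ is
\[
\ppi_u = \ppi_s \cdot \E[\#\text{visits to } u \text{ during one excursion from } s],
\]
where an excursion is the walk from a departure of $s$ until its next return to $s$. Equivalently, $\ppi_u=\ppi_s\,\sum_{k\ge1}\Pr[\text{walk from $s$ is at $u$ at step $k$ without returning to $s$}]$ (plus the lazy-step normalization). We first lower-bound $\ppi_s$. From every $v\neq s$, a lazy step goes to $s$ with probability at least $1/4$; at $s$ it stays with probability $1/2$. Hence the expected excursion length is $O(1)$, and $\ppi_s \ge 1/3$ (the constant claimed in the text). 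Trivially $\ppi_s\le 1$. So it suffices to bound the expected number of visits $N_{r_i}$ to $r_i$ before returning to $s$, up to constant factors. Laziness only multiplies visit counts by $2$ at each vertex and does not change the path probabilities of the non-lazy jump chain.

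The key observation is that, for non-lazy moves, every vertex $v\ne s,t$ with forward out-degree $d_{fo}\ge1$ has out-degree $2d_{fo}$. Thus each specific forward edge out of $v$ is taken with probability exactly $1/(2d_{fo})$, which is $1/2$ on path edges and $1/4$ at binary branchings. The graph restricted to forward edges is a DAG, and $r_i$ has no forward self-loops. So $N_{r_i}$ (in the jump chain) equals the sum, over forward paths $s\leadsto r_i$, of their probabilities; each lazy vertex on the path contributes a factor $2$ to the expected holding time, and that factor cancels the lazy halving per step. It is cleanest to work entirely in the jump chain and multiply by $\Theta(1)$ at the end. The first out-step from $s$ picks each of its three forward edges with probability $1/3$.

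\textbf{Case $[\MM\vv]_i=1$.} Pick $j$ with $\vv_j=\MM_{i,j}=1$ and follow the main path $s\to$ tree $\to c_j\to x_j\to$ tree $\to m_{i,j}\to x_{i,j}\to$ in-tree $\to r_i$. The factors are:
\begin{itemize}
\item $1/3$ at $s$, times $(1/4)^{\log q}$ down the out-tree;
\item $1/2$ for the vector edge;
\item $(1/4)^{\log q}$ down the tree at $x_j$;
\item $1/2$ for the matrix edge;
\item $(1/2)^{\log q}$ up the in-tree, where each internal node has forward out-degree $1$.
\end{itemize}
The product is $\frac{1}{12}q^{-5}$. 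Multiplying by $\ppi_s\ge 1/3$ and adjusting for laziness gives roughly $\frac{1}{36}q^{-5}$ up to the lazy factor. I will track the constant carefully to land at $\ge \frac{1}{9q^5}$; I expect exact bookkeeping of the $1/2$-laziness and of $\ppi_s$ to be needed here. Possibly a better bound $\ppi_s\ge 1/2$-ish is required, which I would get by computing the expected excursion length more sharply.

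\textbf{Case $[\MM\vv]_i=0$.} No main forward path reaches $r_i$, so every forward path from $s$ to $r_i$ passes through the $s'$ gadget. Its probability is at most the product of:
\begin{itemize}
\item $1/3$ at $s$;
\item $(1/2)^{2\log q+4}$ along the long path;
\item $(1/4)^{\lceil\log|V_{\mathrm{main}}|\rceil-1}$ down the $s'$ tree to a single leaf copy, which is $O(q^{-4})$ since $|V_{\mathrm{main}}|\approx 4q^2$;
\item $1/2$ for the copy edge.
\end{itemize}
After the copy edge the walk continues along the main component toward $r_i$, with total probability at most $1$. There are $O(q^2)$ possible entry copies, but only copies from which $r_i$ is forward-reachable matter, and each contributes its own continuation probability. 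The main obstacle is bounding this sum. I would bound $\sum_{u'}\Pr[\text{reach copy }u']\cdot\Pr[u\leadsto r_i]$ by noting that each copy is reached with probability $\le \frac{1}{3}2^{-2\log q-4}\cdot 4\cdot|V_{\mathrm{main}}|^{-2}\cdot\frac12$. I would then bound $\sum_u \Pr[u\leadsto r_i]$ by $O(\log q)$ or $O(1)$, using that the probability of reaching $r_i$ from vertices at tree depth $d$ decays like $2^{-d}$ and that each level has $2^d$ such vertices. This gives $O(q^{-2}\cdot q^{-4}\cdot\mathrm{polylog}) $, which is comfortably below $q^{-5}/75$ for $q\ge4$, after checking the constants. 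Combined with $\ppi_s\le1$ and the laziness factor, this yields the claimed upper bound.
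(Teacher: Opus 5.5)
Your framework matches the paper's: excursions from $s$, $\ppi_{r_i}=\ppi_s N_{r_i}$ with $N_{r_i}$ a sum of forward-path probabilities in a DAG, and products of $1/(2d_{fo})$ along edges. However, the case $[\MM\vv]_i=1$ fails as written, and the remedy you suggest cannot fix it. You charge $\frac{1}{3}$ at $s$ and then $(\frac{1}{4})^{\log q}$ down the out-tree. But $s$ is the root of that depth-$\log q$ tree, so the $\frac{1}{3}$ is the probability of its first edge, and only $\log q-1$ edges of probability $\frac{1}{4}$ remain. Your product $\frac{1}{12q^5}$ is therefore too small by a factor of $4$. Since $\ppi_s\le 1$, no sharper bound on $\ppi_s$ can turn $\frac{1}{12q^5}$ into $\frac{1}{9q^5}$. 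The correct path probability is $\frac{1}{3}(\frac{1}{4})^{2\log q-1}(\frac{1}{2})^{\log q+2}=\frac{1}{3q^5}$, and $\ppi_s\ge\frac{1}{3}$ then gives exactly $\frac{1}{9q^5}$. There is also no lazy factor to track: the lazy and non-lazy chains have the same stationary distribution, so work in the jump chain throughout. There $\ppi_{r_i}=\ppi_s N_{r_i}$ holds exactly, and $\ppi_s\ge\frac{1}{3}$ follows from $\E[T_s^+]\le 1+2$ (the walk must leave $s$, and from any other vertex it returns with probability at least $\frac{1}{2}$ per step). Deferring an unspecified $\Theta(1)$ factor is not acceptable when the target constant is tight.

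The same double count appears in the zero case. The first edge of the $s$--$s'$ path leaves $s$, so the factor is $\frac{1}{3}(\frac{1}{2})^{2\log q+3}=\frac{1}{24q^2}$, and your $\frac{1}{48q^2}$ is not a valid upper bound. There is enough slack to absorb this, but you must also account for every entry vertex that forward-reaches $r_i$, not just the in-tree:
\begin{itemize}
\item the levels of the in-tree at $r_i$ contribute $\log q+1$ in total;
\item the matrix vertices $m_{i,j}$ with $\MM_{i,j}=1$ contribute at most $\frac{1}{2}$;
\item their ancestors in the trees rooted at $x_j$ contribute at most $\frac{1}{2}\sum_{d\ge1}4^{-d}\le\frac{1}{6}$;
\item everything else contributes $0$.
\end{itemize}
Take $\ppi_s\le\frac{1}{2}$ ($s$ has no self-loop) and a per-leaf probability of at most $\frac{1}{16q^4}$ (all leaves of the $s'$-tree have depth at least $2\log q+2$). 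This yields $\ppi_{r_i}\le\frac{\log q+5/3}{768q^6}\le\frac{1}{75q^5}$ for $q\ge4$, which is the paper's computation. Your looser constants ($\ppi_s\le1$, leaf probability at most $4|V_{\mathrm{main}}|^{-2}$, copy edge $\frac{1}{2}$) also work once the path factor is corrected. But the check at $q=4$ has to actually be done.
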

\begin{proof}

Since the stationary distribution is not affected by the laziness of the random walks, we assume w.l.o.g. within this proof that the random walks are non-lazy.

Suppose $[\MM\vv]_i = 1$ and that there exists an index $j$ with $\MM_{ij} = \vv_j = 1$ witnessing this.
Then, there is a main path from $s$ to $r_i$ through $c_j,x_j$, and $m_{i,j},x_{i,j}$ and the binary tree paths connecting them.
The total length of the binary tree edges is $3\log q$.
After the first step from $s$, the path has $2\log q-1$ out-tree edges with sampling probability $1/4$ each, and $\log q$ in-tree edges and two extra edges corresponding to matrix and vector bits with sampling probability $1/2$ each.
Starting from $s$, the chance of following this main
path before a reset is then at least
\[
\frac13 \cdot \left(\frac14\right)^{2\log q-1}\cdot\left(\frac12\right)^{\log q+2} =\frac{1}{3q^5},
\]
where $\frac{1}{3}$ is the probability of taking the first edge of this path from $s$.
As random walks return to $s$ with probability at least $\frac{1}{2}$, we can crudely bound $\ppi_s \ge \frac{1}{3}$, which gives $\ppi_{r_i} \ge \frac{1}{9q^5}$.

Now suppose $[\MM\vv]_i=0$. 
A path from $s$ can reach $r_i$ only through the added forward tree.
The chance of reaching $s'$ from $s$ is $\frac{1}{24q^2}$.
Given that the walk is at $s'$, the probability of reaching the leaf assigned to any fixed vertex of $V_{\mathrm{main}}$ is at most $\frac{1}{16q^4}$.
The in-tree for row $i$ contributes $\log q+1$, one from each level.
The matrix vertices for row $i$ contribute at most $1/2$, and their column trees contribute at most $\frac12\sum_{d=1}^{\log q}4^{-d}\le \frac 1 6$.
All other vertices contribute zero in the zero case.
As $s$ has no self-loop, the expected return time to $s$ is at least $2$. Thus, $\ppi_s \le \frac{1}{2}$.
We then have
\[
 \ppi_{r_i}
 \leq \frac12\cdot\frac{1}{24q^2}\cdot\frac{1}{16q^4}
       \left(\log q+\frac53\right)
 \leq \frac{1}{75q^5},
\]
where the last inequality holds for $q\geq4$.
\end{proof}

\subsection{Proof of \Cref{thm:mixing-hardness}}

Similar to OMv reductions standard in the literature~\cite{HenzingerKNS15,HenzingerPS22}, we simulate an OMv instance by partitioning it into evenly sized blocks and applying OR over all products on the blocks.
This construction allows us to bypass any large polynomial preprocessing time to argue on the update and query time alone.

\begin{proof}[Proof of \Cref{thm:mixing-hardness}]
We assume w.l.o.g. that the dimension $n$ of the OMv instance is a power-of-two, padding the matrix and vectors with zeros if necessary.
We take $4\le q\le n$ to be a power of two (rounding down if necessary) and set $r=\frac n q$.
We partition the matrix $\MM$ into $r^2$ blocks of size $q \times q$.
To ensure the $\frac{2}{3}$ probability of success, for each block, we keep $R=\Theta(\log n)$ copies from which we take the $\text{median}$.
We assume also there is a constant $c \ge 2$ such that a dynamic $(1+0.5)$-approximate stationary data structure preprocesses in time $O(q^c)$.

In every copy for a block, we use at most $q$
updates for each sub-vector of $\vv$, and query its $q$ row vertices.
We take the median estimate for each row vertex and test it against the threshold $\frac{1}{24q^5}$. 
By \Cref{lem:hard_stamass} and $(1+0.5)$-approximate, this recovers $\MM^{h,\ell}\vv^\ell$, since it satisfies 
\[
    \frac{2}{75q^5}<\frac{1}{24q^5}<\frac{1}{2 \cdot 9q^5}.
\]
To get the final product, we take an OR over all blocks for each row $i$.

There are $Rr^2$ copies of the dynamic approximate stationary data structures in total. 
With the parameter choice above, their total preprocessing time is
    $O(Rr^2q^c)
    =\widetilde O(n^2q^{c-2}).
    $
For $q = \Theta(n^{1/2c})$, this is $\widetilde O(n^{2.5-1/c})$, which is strictly less than $O(n^{2.5})$.
Constructing the graphs themselves takes only
$O(Rr^2q^2)=\widetilde O(n^2)$ additional time.

For each of the $n$ online vectors, every copy receives at most $q$ updates and $q$ queries. 
Hence the total numbers of updates and queries are both $O(Rnr^2q)=\widetilde O(n^3/q).$
Computing all medians takes $O(Rnr^2q) = \tO(n^3/q)$ time, and the entry-wise ORs take $O(n^3/q)$ total time, which are both subsumed by the total number of updates and queries.

By \Cref{lem:omv_size,lem:omv_mixing}, each maintained graph has $N=\Theta(q^2)$ vertices, maximum out-degree four, and mixing time $O(\log N)$ always.

By our choice of $R=\Theta(\log n)$, a Chernoff bound and a union bound over at most $n^3$ medians give an overall success probability of at least $\frac{2}{3}$ as required.

We assume for contradiction that the update and query bounds are at most
\[
    O(N^{1/2-\delta})=O(q^{1-2\delta}).
\]
It suffices to consider any $0<\delta\le1/4$.
Consequently, the total online time is
\[
    \widetilde O\!\left(\frac{n^3}{q}\,q^{1-2\delta}\right)
    =\widetilde O(n^3q^{-2\delta})
    =\widetilde O\bigl(n^{3-\delta/c}\bigr).
\]
Taking, for simplicity, $\eta = \frac{\delta}{2c} > 0$, the overall processing and online computation take $O(n^{3-\eta})$ time. 
This contradicts the OMv conjecture (\Cref{conj:omv}) and proves the theorem.
\end{proof}

\section{Fully Dynamic Algorithms for \texorpdfstring{$\kappa$}{kappa}-Uniform Graphs}
\label{sec:dir}

Under the additional requirement that an unweighted directed $\kappa$-mixing graph has almost uniform distribution, the hard instance for designing efficient dynamic data structures can be avoided.
We show that for such $\kappa$-uniform graphs, one can design efficient algorithms for both computing entry-wise approximate stationary distribution (see \Cref{thm:dynsta_formal}) and all-pairs commute time (see \Cref{thm:dyncomm-formal}), again, using dynamic short random walks.

\begin{theorem}[Dynamic stationary approximation for $\kappa$-uniform graphs]
\label{thm:dynsta_formal}
There is a data structure for a fully dynamic unweighted directed $\kappa$-uniform graph $G$ with expected time $O(\kappa^6 \eps^{-2}\frac{ \log^3(\kappa n \eps^{-1})}{\log^2 n})$ per update that supports a $(1+\eps)$-approximate stationary entry query in $O(1)$ expected time.
It succeeds with high probability and has preprocessing time and space complexity of $O(m+n \kappa^3\eps^{-2}\frac{ \log^2(\kappa n \eps^{-1})}{\log n})$ in expectation.
\end{theorem}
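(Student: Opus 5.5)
The plan is to estimate $\ppi_v$ from the endpoints of short lazy random walks started from every vertex, and to maintain these walks with Algorithm~\ref{alg:dyn_rw}, which is stated to work for directed graphs as well. The estimator is based on
\[
\ppi_v \approx \frac1n\sum_{u}(\PP^{\ell}\ee_u)_v=\frac1n(\PP^\ell\vone)_v .
\]
Each vertex starts $s$ independent lazy walks of length exactly $\ell$. The estimate $\widetilde{\ppi}_v$ is the number of walk endpoints at $v$ divided by $ns$. It is stored as a counter in a hash table, so a query takes $O(1)$ expected time. When a walk is repaired, we decrement the counter at its old endpoint and increment the counter at its new one.

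\textbf{Choice of $\ell$.} $\kappa$-mixing gives error only $n^{-10}$ after $\kappa$ steps, but we need additive error $\eps\ppi_v/2\ge \eps/(2\kappa n)$. Coupling or submultiplicativity of the total-variation distance gives an error bound of roughly $n^{-10\lfloor \ell/\kappa\rfloor}$ at length $\ell$. So it suffices to take
\[
\ell=O\!\left(\kappa\,\frac{\log(\kappa n\eps^{-1})}{\log n}\right)
\]
whenever $\eps/(\kappa n)<n^{-10}$; otherwise $\ell=\kappa$ already suffices. Then $|(\PP^\ell\vone)_v/n-\ppi_v|\le \eps\ppi_v/2$.

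\textbf{Choice of $s$.} The count at $v$ is a sum of $ns$ independent indicators with mean about $ns\ppi_v\ge s/\kappa$. A multiplicative Chernoff bound plus a union bound over vertices and over polynomially many time steps gives relative error $\eps/2$ with high probability once $s=O(\kappa\eps^{-2}\log n)$. I will tune $s$ together with $\ell$ to match the stated exponents. The analysis applies at each fixed time because the dynamic algorithm keeps the walks distributed as fresh, independent walks in the current graph.

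\textbf{Congestion step (the crux).} The update time requires a directed analogue of \Cref{lemma:und_cong}. For undirected graphs that lemma used $\PP\dd=\dd$; here I use the uniformity condition \eqref{eq:directed-kappa-uniform} instead. Since $\vone\le \kappa n\,\ppi$ coordinatewise and $\PP$ is nonnegative with $\PP\ppi=\ppi$,
\[
\PP^t\vone\le \kappa n\,\PP^t\ppi=\kappa n\,\ppi\le \kappa^2\vone .
\]
Summing over the $\ell+1$ positions of all $ns$ walks, the expected vertex congestion is at most $s\sum_{t\le\ell}(\PP^t\vone)_v=O(s\ell\kappa^2)$ for every $v$. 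This is precisely where $\kappa$-mixing alone fails, as in the hardness construction where $s$ receives huge mass.

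\textbf{Update costs.} The out-edge $(v,w)$ is traversed from an occurrence at $v$ with probability $1/(2d_v)$, so its expected edge congestion is $O(s\ell\kappa^2/d_v)\le O(s\ell\kappa^2)$. The correctness argument in the proof of \Cref{lemma:dyn_rw} carries over unchanged, with insertions handled only at the tail $v$ with resampling probability $1/(d_v+1)$. The expected number of repaired walks per update is therefore $O(s\ell\kappa^2)$. Each repair costs $O(\ell)$, so the expected update time is $O(s\ell^2\kappa^2)$.

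\textbf{Preprocessing.} Generating the walks costs $O(ns\ell)$, and reading the graph costs $O(m)$.

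Substituting the chosen $\ell$ and $s$, and tracking the logarithmic factors from the mixing amplification, should yield the stated bounds. The step I expect to need the most care is this final bookkeeping: matching the precise $\kappa^6$ and $\log^3(\kappa n\eps^{-1})/\log^2 n$ dependence, and making the Chernoff argument uniform over all vertices and time steps.
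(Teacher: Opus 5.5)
Your proposal is correct and follows the paper's proof. It uses the same estimator $\frac1n(\PP^\ell\vone)_v$ read off walk endpoints, the same parameters $\ell=\Theta(\kappa\log(\kappa n\eps^{-1})/\log n)$ and $s=\Theta(\kappa\eps^{-2}\log n)$, the same congestion bound $\PP^t\vone\le\kappa n\,\ppi\le\kappa^2\vone$ from $\kappa$-uniformity, and the same dynamic walk structure. The only difference is that you maintain walks of the single length $\ell$, whereas the paper runs the full $(\ell,s)$-structure (all lengths $0,\dots,\ell-1$) as a black box and uses only the length-$(\ell-1)$ walks; that is exactly why its bounds are $O(\kappa^2s\ell^3)$ and $O(m+ns\ell^2)$, while your $O(\kappa^2s\ell^2)$ and $O(m+ns\ell)$ are smaller by a factor $\Theta(\ell)\ge1$ and already imply the stated bounds, so the final ``bookkeeping'' needs no tuning.
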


\begin{theorem}[Dynamic commute time for $\kappa$-uniform graphs]
\label{thm:dyncomm-formal}
For $0<\eps<1/2$, there is a data structure for a fully dynamic unweighted directed $\kappa$-uniform graph $G$ with expected time
$O(\kappa^6 \eps^{-2}\frac{ \log^4(\kappa n \eps^{-1})}{\log^3 n})$ per update
that supports $(1+\eps)$-approximate commute time queries in $O(1)$ expected time.
It succeeds with high probability over an oblivious polynomial length sequence and has preprocessing time and space complexity of $O(m+n\kappa^3\eps^{-2}\frac{\log^3(\kappa n\eps^{-1})}{\log^2 n})$.
\end{theorem}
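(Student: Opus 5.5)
The approach is to express the commute time through the hitting-vector formula $C_{uv} = \frac{\hh_{u,v}(u)}{2\ppi_u} - \frac{\hh_{u,v}(v)}{2\ppi_v}$ from the Fact in the preliminaries. The two ingredients, $\hh_{u,v}$ and $\ppi$, are then estimated from dynamic short random walks maintained by Algorithm~\ref{alg:dyn_rw}; this is the directed analogue of the Li--Sachdeva scheme.

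\textbf{Step 1 (truncation).} By $\kappa$-mixing, $\PP^k(\ee_u-\ee_v)$ has entries at most $2n^{-10}$ for $k\ge\kappa$. More generally, the entries decay geometrically: after $\ell = O(\kappa\log(\kappa n/\eps)/\log n)$ steps the error is polynomially small in $\eps/(\kappa n)$. This is where the $\log(\kappa n\eps^{-1})/\log n$ factor comes from, since each block of $\kappa$ steps reduces the error by a factor $n^{-\Omega(1)}$. So $\hh_{u,v}$ can be replaced by $\hh^{(\ell)}_{u,v}$ with additive error much smaller than $\eps\,\ppi$.

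\textbf{Step 2 (lower bound on the true quantity).} I need a lower bound on the commute time so that additive errors become relative ones. Since $\ppi_u\le\kappa/n$ and $H_v(u)\ge 1/\ppi_u$-type bounds hold (for example, through the return time $1/\ppi_u$ and the fact $C_{uv}\ge$ return time), I get $C_{uv}\ge \Omega(n/\kappa)$. Equivalently, $\hh_{u,v}(u)-\hh_{u,v}(v)\cdot\ppi_u/\ppi_v = \Omega(1/\kappa^{2})$ after normalization. Alternatively, I will use $\hh_{u,v}(u)\ge 1$, since the $k=0$ term contributes $1$ and later terms are controlled. Making this precise for directed walks, via the Eulerian rescaling and the identity $(\II-\PP)\hh_{u,v}=\ee_u-\ee_v$, is the step I expect to be the main obstacle. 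The key point is that the two terms in the formula have opposite signs, so the estimation error must be bounded relative to their combination rather than each term separately.

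\textbf{Step 3 (sampling).} I estimate $\hh^{(\ell)}_u(u)$, $\hh^{(\ell)}_u(v)$, and the analogous coordinates with $s$ walks of each length from every vertex, as in Algorithm~\ref{alg:er_static}. Each coordinate is a sum of $\ell$ Bernoullis with means $(\PP^k\ee_u)_w \le \kappa/n + n^{-10}$ for $k\ge\kappa$, and at most $1$ otherwise. Bernstein's inequality then gives additive error $\eps/\kappa^{O(1)}$ with $s=O(\kappa^{2}\eps^{-2}\log(\kappa n/\eps)/\log n)$ walks. The stationary entries $\ppi_u$ come from \Cref{thm:dynsta_formal} to accuracy $1\pm\eps$. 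A union bound over queries at each time, over the oblivious polynomial-length sequence, handles the high-probability claim.

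\textbf{Step 4 (dynamics and congestion).} I run Algorithm~\ref{alg:dyn_rw} in its directed form. Its correctness argument from \Cref{lemma:dyn_rw} only uses the out-edge choice, so it carries over directly. For running time, I replace \Cref{lemma:und_cong} by a directed bound. Let $\pi_{\min}\ge 1/(\kappa n)$. Then $\vone\le \kappa n\,\ppi$, so $\PP^t\vone\le\kappa n\ppi\le\kappa^2\vone$. The vertex congestion is therefore $O(s\ell^2\kappa^2)$, and the edge congestion out of $a$ is $O(s\ell^2\kappa^2/d_a)$. Insertions resample each incident walk with probability $1/(d_a+1)$, so the cost per update is $O(s\ell^3\kappa^2)$. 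Substituting $s$ and $\ell$, and maintaining sparse hit counts and $m$ as in the proof of \Cref{thm:dyner_formal}, yields the stated update, query, and preprocessing bounds.
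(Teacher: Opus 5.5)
Your architecture matches the paper's: truncate at $\ell=\Theta(\kappa\log(\kappa n\eps^{-1})/\log n)$, lower-bound $C_{u,v}$ via Kac, estimate the truncated hitting vectors with Bernstein, take $\ppi$ from \Cref{thm:dynsta_formal}, and pay $O(\kappa^2 s\ell^3)$ per update from the $\kappa^2$ congestion bound. However, Steps~2--3 get the accuracy requirement wrong, and the stated bounds do not follow. The missing idea is to use Kac's bound in its sharp form $C_{u,v}\ge 1/\min\{\ppi_u,\ppi_v\}$ (\Cref{lem:comm_lb}) instead of weakening it to $n/\kappa$. With the sharp form, an absolute error $\delta$ in the coordinates $\hh_{u,v}(u)$ and $\hh_{u,v}(v)$ perturbs the formula by at most $\frac{\delta}{2\ppi_u}+\frac{\delta}{2\ppi_v}\le\frac{\delta}{\min\{\ppi_u,\ppi_v\}}\le\delta\,C_{u,v}$. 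So $\delta=\Theta(\eps)$ suffices, with no dependence on $\kappa$ or on $\ppi$. Bernstein with total variance at most $\ell/s$, plus a union bound over pairs, then needs only $s=\Theta(\ell\eps^{-2}\log n)=\Theta(\kappa\eps^{-2}\log(\kappa n\eps^{-1}))$. With this $s$, $\kappa^2 s\ell^3$ and $ns\ell^2$ are exactly the claimed update and preprocessing bounds. Your normalized bound $\Omega(1/\kappa^2)$ instead forces $\delta\le\eps/\kappa^2$, because errors are divided by $\ppi_u\ge 1/(\kappa n)$ and compared with $C_{u,v}\ge n/\kappa$. Your Bernstein estimate, with variance of order $\kappa/s$, then calls for $s=\Theta(\kappa^5\eps^{-2}\log n)$, which is far from the theorem. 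The step you flag as the main obstacle is immediate: a walk from $u$ that visits $v$ and comes back has in particular returned to $u$. So $C_{u,v}\ge\E[T_u^+]=1/\ppi_u$ by \Cref{lem:kac-return-time}, and symmetrically for $v$; no Eulerian rescaling is needed.

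Your concrete choice $s=O(\kappa^2\eps^{-2}\log(\kappa n\eps^{-1})/\log n)$ does not come out of your own argument. Since $\log(\kappa n\eps^{-1})/\log n=O(1)$ for polynomially bounded $\kappa,\eps^{-1}$, it lacks the $\log n$ factor needed for a union bound over $n^2$ pairs and polynomially many update times. It also does not reflect the $\eps/\kappa^{O(1)}$ accuracy you target. Taken at face value, it gives $\kappa^2s\ell^3=\kappa^7\eps^{-2}\log^4(\kappa n\eps^{-1})/\log^4 n$ and $ns\ell^2=n\kappa^4\eps^{-2}\log^3(\kappa n\eps^{-1})/\log^3 n$. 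These are not the stated bounds, and where they look smaller it is only because of the missing $\log n$; so the last sentence of Step~4 is false. Two smaller points:
\begin{itemize}
\item Your fallback $\hh_{u,v}(u)\ge 1$ fails for directed graphs. Since $\hh_{u,v}(u)=2\ppi_u H_v(u)$, it equals $2\ppi_u\le 2\kappa/n$ when all out-edges of $v$ go to $u$.
\item The sign issue is vacuous. Because $\hh_{u,v}(u)/(2\ppi_u)=H_v(u)\ge 0$ and $-\hh_{u,v}(v)/(2\ppi_v)=H_u(v)\ge 0$, the two summands never cancel, and a $(1\pm\eps/4)$-approximate $\ppi$ transfers to a relative error directly.
\end{itemize}
Your congestion bound and the directed validity of Algorithm~\ref{alg:dyn_rw} are fine and match \Cref{lemma:dir_cong,lemma:dir_dyn_rw}.
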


\subsection{Dynamic Random Walks on $\kappa$-Uniform Graphs}
On $\kappa$-uniform directed graphs, we show that Algorithm~\ref{alg:dyn_rw} suffices for maintaining $(\ell,s)$-random walks, analogous to \Cref{lemma:dyn_rw}.
\begin{lemma}[Directed $\kappa$-uniform fully dynamic $(\ell,s)$-random walks]
\label{lemma:dir_dyn_rw}
The data structure in Algorithm~\ref{alg:dyn_rw} maintains a set of uniform $(\ell,s)$-random walks for a fully dynamic unweighted directed $\kappa$-uniform graph in $O(\kappa^2s\ell^3)$ expected time and $O(\kappa^2s\ell^2)$ expected recourse per update.
It has preprocessing time and space usage of $O(m + ns\ell^2)$.
\end{lemma}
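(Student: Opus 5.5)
We reuse the proof of \Cref{lemma:dyn_rw} verbatim for the correctness part and replace only the congestion bound of \Cref{lemma:und_cong} by a directed analogue that exploits $\kappa$-uniformity. Correctness does not use symmetry. On insertion of $(v,w)$ only the tail $v$ is processed: conditional on a non-lazy step at $v$, each of the $d_v$ old out-edges keeps probability $\frac{1}{d_v+1}$ and the new edge gets $\frac{1}{d_v+1}$. On deletion the replaced traversal redistributes uniformly to $\frac{1}{d_v-1}$ per remaining out-edge. Keeping only the earliest affected incidence per walk and resampling the suffix preserves mutual independence and freshness exactly as before. The initialization cost $O(m+ns\ell^2)$ is unchanged. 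What remains is to bound the expected number of affected walks by $O(\kappa^2 s\ell^2)$; each costs $O(\ell)$ to repair, which gives the claimed time and recourse.

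\textbf{Directed congestion.} The expected number of visits to $v$ is $s\sum_{k<\ell}\sum_{t\le k}(\PP^t\vone)_v$. In the undirected proof we dominated $\vone$ by the fixed point $\dd$. Here the fixed point of $\PP$ is $\ppi$, and by \eqref{eq:directed-kappa-uniform} we have $\vone\le \kappa n\,\ppi$ coordinatewise. Nonnegativity of $\PP$ then gives $\PP^t\vone\le \kappa n\,\PP^t\ppi=\kappa n\,\ppi\le \kappa^2\vone$. So every vertex has expected congestion $O(\kappa^2 s\ell^2)$, uniformly and independent of its degree.

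\textbf{Update cost.} For an insertion at tail $v$, the number of sampled incidences is $\frac{1}{d_v+1}\cdot O(\kappa^2 s\ell^2)\le O(\kappa^2 s\ell^2)$. For a deletion of $(v,w)$, a walk at $v$ traverses that edge with probability $\frac{1}{2d_v}\le\frac12$, so the edge congestion is also $O(\kappa^2 s\ell^2)$. Multiplying by the $O(\ell)$ suffix repair cost gives $O(\kappa^2 s\ell^3)$ expected time and $O(\kappa^2 s\ell^2)$ expected recourse. The sampling primitive costs $O(p|I|+1)$, which is within this bound.

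\textbf{Main obstacle.} The analysis must be valid with respect to the current graph. The congestion bound is applied to the graph before the update, and that graph is $\kappa$-uniform by assumption. For deletions, the walks currently stored are fresh walks on the pre-deletion graph, which is $\kappa$-uniform. So as long as the promise holds at every time step, the bound applies to the pre-update graph in each step, which is all the analysis needs. The expectation is over the current walks only, and these are independent of the oblivious update sequence by the inductive invariant.

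A subtler point is that, unlike the undirected case, the per-vertex congestion bound no longer scales with $d_v$. This loses nothing for insertions, since we only use $\frac{1}{d_v+1}\le 1$. It also loses nothing for deletions, since the per-edge traversal bound only needs $\frac{1}{2d_v}\le 1$.
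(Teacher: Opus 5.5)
Your proposal is correct and takes essentially the same route as the paper. Correctness is inherited from the undirected argument, with only the tail processed in the directed case, and the congestion bound comes from $\vone\le\kappa n\,\ppi$ plus stationarity, $\PP^t\vone\le\kappa n\,\ppi\le\kappa^2\vone$, exactly as in \Cref{lemma:dir_cong}, with the edge bound inherited from the tail vertex. You also say explicitly that the bound is applied to the pre-update graph, which is $\kappa$-uniform by the standing promise; the paper's sketch leaves this implicit.
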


The crux of the proof of \Cref{lemma:dir_dyn_rw} is then the following congestion guarantee, specialized for $\kappa$-uniform graphs.
\begin{lemma}[Congestion bound for $\kappa$-uniform graphs]
\label{lemma:dir_cong}
For $(\ell,s)$-random walks on a $\kappa$-uniform graph with $n$ vertices, in expectation the congestion is $O(s\ell^2 \kappa^2)$ for any vertex and edge.
\end{lemma}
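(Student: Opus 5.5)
The plan mirrors the proof of \Cref{lemma:und_cong}. The only change is that the exactly invariant vector $\dd$ is replaced by the stationary vector $\ppi$, together with the two-sided bound \eqref{eq:directed-kappa-uniform}. The expected number of visits to $v$ is
\[
    s\sum_{k=0}^{\ell-1}\sum_{t=0}^{k}(\PP^t\vone)_v ,
\]
so it suffices to show $(\PP^t\vone)_v\le O(\kappa^2)$ for every $t\ge 0$ and every $v$. Summing over the $O(\ell^2)$ pairs $(k,t)$ then gives the vertex bound $O(s\ell^2\kappa^2)$.

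\textbf{Vertex bound.} The lower bound in \eqref{eq:directed-kappa-uniform} gives $\vone\le \kappa n\,\ppi$ coordinatewise. Since $\PP$ is entrywise nonnegative and $\PP\ppi=\ppi$, we get
\[
    \PP^t\vone\le \kappa n\,\PP^t\ppi=\kappa n\,\ppi\le \kappa n\cdot\frac{\kappa}{n}\vone=\kappa^2\vone ,
\]
using the upper bound in \eqref{eq:directed-kappa-uniform} in the last step. This is exactly the sandwich argument of \Cref{lemma:und_cong}. It does not use $\kappa$-mixing at all, only the uniformity of the stationary distribution, which matches the remark in the overview that congestion control comes from uniformity.

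\textbf{Edge bound.} This is the step I expect to need care. A directed edge $e=(v,w)$ is traversed from an occurrence of $v$ (before the last step) with probability $\frac{1}{2d_v}$, where $d_v\ge1$ is the out-degree. So the expected edge congestion is at most $\frac{1}{2d_v}$ times the expected vertex congestion of $v$, and $\frac{1}{2d_v}\le\frac12$, so this is $O(s\ell^2\kappa^2)$. Unlike the undirected case, we cannot gain a $1/d_v$ factor, because vertex congestion is no longer proportional to degree. We do not need that gain here, since the vertex bound is already degree-independent. (If a sharper statement were wanted, I would bound the stationary flow $\ppi_v/d_v$ on each edge directly. That is not necessary for the claimed bound.)

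Finally, the same argument also covers the incidence-sampling step used in \Cref{lemma:dir_dyn_rw}: on insertion at $v$ we sample each incidence with probability $1/(d_v+1)\le 1$, and the expected number of incidences at $v$ is $O(s\ell^2\kappa^2)$. So the bounds stated in \Cref{lemma:dir_cong} are what that lemma requires.
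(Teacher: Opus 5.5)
Your proposal is correct and is essentially the paper's own argument. You sandwich $\vone\le\kappa n\,\ppi$, use nonnegativity of $\PP$ and $\PP^t\ppi=\ppi$ to get $(\PP^t\vone)_v\le\kappa n\,\ppi_v\le\kappa^2$, sum over the $O(\ell^2)$ step indices, and then transfer the vertex bound to each edge through its tail, exactly as the paper does.
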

\begin{proof}
For every vertex $x$ and $k\geq 0$, the $\kappa$-uniformity of the stationary distribution gives
\[
    \sum_{u\in V}(\PP^k)_{xu}
    \leq \sum_{u\in V}(\PP^k)_{xu}(\kappa n \cdot \ppi_u)
    =\kappa n\ppi_x
    \leq \kappa^2.
\]
Then, for a fixed walk length $k$, the expected number of visits to $x$ at each step, summed over all starting vertices and all $s$ copies, is then at
most $s\kappa^2$. 
There are $O(\ell^2)$ total walk steps over all lengths, which proves the vertex bound.
A walk can cross an edge only after visiting
its tail, so the same bound holds for every edge.
\end{proof}

\begin{proof}[Proof sketch of \Cref{lemma:dir_dyn_rw}]
The correctness of the algorithm follows almost identically to that of \Cref{lemma:dyn_rw}.

It remains only to adjust the running time and space bounds. 
These are even more straightforward here: taking the bound from \Cref{lemma:dir_cong} suffices, giving
$O(\kappa^2s\ell^2)$ expected recourse and
$O(\kappa^2s\ell^3)$ expected update time.
Initialization samples $O(ns\ell^2)$ random walk steps in total, which gives the claimed preprocessing time and space.
\end{proof}

\subsection{Dynamic Stationary Approximation}
In this section, we prove \Cref{thm:dynsta_formal}.
We show first that with a small $\tO(\kappa)$ length, the distribution obtained from a uniform starting vertex converges entry-wise very closely to the stationary distribution.

\begin{lemma}[Mixing length] \label{lem:kappa_mix_len} 
For any $\kappa$-uniform graph $G$ and any $0 < \eps < \frac{1}{2}$, for any length 
$\ell \ge C \kappa \frac{\log(\kappa n \eps^{-1})}{\log n}$ for a sufficiently large constant $C$, for all $v \in V$,
\[
\Abs{\frac{1}{n} (\PP^\ell \vone)_v - \ppi_v} \le \eps \ppi_v
\]
\end{lemma}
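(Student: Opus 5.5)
We need $\frac1n(\PP^\ell\vone)_v$ to be within relative error $\eps$ of $\ppi_v$. Since $\PP$ is column-stochastic, $\frac1n\PP^\ell\vone$ is the distribution after $\ell$ steps from the uniform start. The plan is to bootstrap the $\kappa$-mixing guarantee by blocks. Write $\ell = b\kappa + r$ with $0\le r<\kappa$, and define the error $\ee^{(t)} \defeq \frac1n\PP^t\vone - \ppi$. This vector has zero total sum and satisfies $\ee^{(t+1)} = \PP\ee^{(t)}$ because $\PP\ppi=\ppi$.

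\textbf{Key contraction step.} For any vector $\xx$ with $\vone^\top\xx=0$, we have
\[
\PP^\kappa\xx = \sum_u \xx_u(\PP^\kappa\ee_u - \ppi).
\]
By \eqref{eq:directed-kappa-mixing}, every entry of $\PP^\kappa\ee_u-\ppi$ has absolute value at most $n^{-10}$. Hence
\[
\norm{\PP^\kappa\xx}_\infty \le n^{-10}\norm{\xx}_1 \qquad\text{and}\qquad \norm{\PP^\kappa\xx}_1\le n^{-9}\norm{\xx}_1.
\]
Applying this $b$ times gives $\norm{\ee^{(b\kappa)}}_1 \le n^{-9b}\norm{\ee^{(0)}}_1 \le 2n^{-9b}$. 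The remaining $r$ steps do not increase the $\ell_1$ norm, since $\PP$ is column-stochastic. For the final entrywise bound, I would arrange the last block to be a full $\kappa$ block: apply the $r$ extra steps first, then $b$ blocks, obtaining
\[
\norm{\ee^{(\ell)}}_\infty \le n^{-10}\cdot 2n^{-9(b-1)}.
\]

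\textbf{Converting to relative error.} By \eqref{eq:directed-kappa-uniform}, $\ppi_v\ge 1/(\kappa n)$. It therefore suffices that $2n^{-9b-1}\le \eps/(\kappa n)$, i.e.\ $n^{9b}\ge 2\kappa/\eps$. This holds once $b\ge C'\log(\kappa n\eps^{-1})/\log n$ for a constant $C'$. Taking $\ell\ge C\kappa\log(\kappa n\eps^{-1})/\log n$ with $C$ large, and $b=\lfloor \ell/\kappa\rfloor\ge 1$, suffices. Note that $\log(\kappa n/\eps)/\log n\ge 1$, so $\ell\ge C\kappa$ and $b\ge1$ automatically.

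\textbf{Expected obstacle.} The only delicate point is the bookkeeping. Mixing is defined only at exactly $\kappa$ steps, so the contraction must be applied in blocks of exactly $\kappa$, with the leftover steps handled by $\ell_1$-nonexpansiveness. One also has to check that the $n^{-10}$ entrywise error is converted correctly into an $\ell_1$ contraction factor $n^{-9}<1$, which requires $n\ge2$ and is trivial otherwise.
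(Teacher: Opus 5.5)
Your proof is correct and follows essentially the same route as the paper: the contraction $\norm{\PP^\kappa\xx}_1\le n^{-9}\norm{\xx}_1$ for mean-zero $\xx$, iterated over $\lfloor \ell/\kappa\rfloor$ blocks, and then the bound $\ppi_v\ge 1/(\kappa n)$ from $\kappa$-uniformity. Your handling of the leftover $r<\kappa$ steps via $\ell_1$-nonexpansiveness of $\PP$ is a careful version of the paper's ``we may assume $\kappa$ divides $\ell$.'' Ending with an $\ell_\infty$ step on the last block only gains an extra factor of $n^{-1}$, which the argument does not need.
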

\begin{proof}
Up to a scaling by a constant, we may assume $\kappa$ divides $\ell$.
The $\kappa$-mixing assumption implies, for every $\xx$ with $\vone^\top\xx=0$, that
\[
\norm{\PP^\kappa\xx}_1
\le \sum_u|\xx_u|\norm{\PP^\kappa\ee_u-\ppi}_1
\le n^{-9}\norm{\xx}_1.
\]
Then, induction and averaging over all vertices give
\[
\Abs{\frac1n(\PP^\ell\vone)_v-\ppi_v}
\le 2n^{-9\ell/\kappa}\le\frac{\eps}{\kappa n}\le\eps\ppi_v,
\]
where the second inequality follows from the lower bound on $\ell$ in the lemma statement for sufficiently large $C$, and the last follows from $\kappa$-uniformity.
\end{proof}

Furthermore, we show that for such short random walks, one can obtain a high-accuracy estimation for the hitting vectors using only a small set of length $\ell$ walks starting from each vertex.
\begin{lemma}[Hitting vector concentration]
\label{lem:stationary-endpoints}
For a $\kappa$-uniform graph, assume that $\ell$ satisfies the bound in \Cref{lem:kappa_mix_len} and sample for each vertex $s$ walks of length $\ell$. 
If $s\geq C'\kappa\eps^{-2}\log n$ for a sufficiently large constant $C'$, then, with high probability, for every
vertex $v$,
\[
\left|\frac1n\sum_u\widetilde{\hh}_u^{(=\ell)}(v)
      -\frac1n\sum_u\hh_u^{(=\ell)}(v)\right|
\leq\eps\ppi_v.
\]
\end{lemma}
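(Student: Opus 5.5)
The natural tool is a multiplicative Chernoff bound applied separately to each target vertex $v$, followed by a union bound over all $n$ vertices.

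\textbf{Setting up the sum.} Fix $v$. The estimator $\frac1n\sum_u\widetilde{\hh}_u^{(=\ell)}(v)$ equals $\frac{1}{ns}\sum_{u}\sum_{i=1}^{s} X_{u,i}$. Here $X_{u,i}$ indicates that the $i$-th length-$\ell$ walk from $u$ ends at $v$. The $ns$ indicators are independent Bernoulli variables, because the walks are sampled independently. Let $S=\sum_{u,i}X_{u,i}$. Its mean is
\[
\mu \defeq \E[S] = s\sum_u(\PP^\ell\ee_u)_v = s(\PP^\ell\vone)_v .
\]
So the target quantity $\frac1n\sum_u\hh_u^{(=\ell)}(v)$ is exactly $\mu/(ns)$.

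\textbf{Lower-bounding the mean.} By \Cref{lem:kappa_mix_len}, applied with accuracy $1/2$ (the hypothesis on $\ell$ allows this up to the constant $C$), we get $\frac1n(\PP^\ell\vone)_v\ge\frac12\ppi_v$. Combining this with $\kappa$-uniformity gives $\mu\ge \frac{ns}{2}\ppi_v\ge \frac{s}{2\kappa}$. We also have $\mu\le \frac{3ns}{2}\ppi_v$.

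\textbf{Applying Chernoff.} The multiplicative Chernoff bound gives
\[
\Pr\bigl[|S-\mu|>\eps\mu\bigr]\le 2\exp(-\eps^2\mu/3)\le 2\exp\!\Par{-\frac{\eps^2 s}{6\kappa}}.
\]
With $s\ge C'\kappa\eps^{-2}\log n$ and $C'$ large, this is at most $n^{-c}$ for any desired constant $c$. On this event,
\[
\Abs{\frac{S}{ns}-\frac{\mu}{ns}}\le\eps\,\frac{\mu}{ns}\le \frac{3}{2}\eps\,\ppi_v .
\]
The factor $3/2$ is absorbed by running the argument with $\eps/2$ in place of $\eps$, which only changes $C'$. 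A cleaner alternative uses the Chernoff form $\Pr[|S-\mu|>t]\le2\exp(-t^2/(3\mu))$ directly with $t=\eps ns\ppi_v$ and $\mu\le\frac32 ns\ppi_v$. This gives exponent $\Omega(\eps^2 ns\ppi_v)=\Omega(\eps^2 s/\kappa)$ and needs no rescaling.

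\textbf{Union bound and main obstacle.} A union bound over the $n$ choices of $v$ finishes the proof. The only delicate point is using both sides of the mixing guarantee. The upper bound $\mu\lesssim ns\ppi_v$ lets us convert the relative error into error measured against $\ppi_v$. The lower bound $\mu\gtrsim s/\kappa$ is what makes the concentration strong, and it is where $\kappa$-uniformity together with \Cref{lem:kappa_mix_len} is essential. Everything else is a routine Chernoff-plus-union-bound argument.
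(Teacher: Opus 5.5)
Your proof is correct and follows the same route as the paper. You lower-bound the expected number of length-$\ell$ walks ending at $v$ by $s/(2\kappa)$ via \Cref{lem:kappa_mix_len} and $\kappa$-uniformity, then apply a Chernoff bound and a union bound over $v$. You also make explicit the matching upper bound $\mu\le\frac32 ns\ppi_v$, which the paper's terse proof uses implicitly to measure the deviation against $\ppi_v$. One small nit: \Cref{lem:kappa_mix_len} is stated only for accuracy strictly below $\frac12$, so invoke it with the given $\eps<\frac12$ instead, which gives the same $\frac12\ppi_v$ lower bound.
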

\begin{proof}
For vertex $v$, the expected number of random walks of length $\ell$ that terminate at $v$ is exactly $s \sum_u(\PP^\ell)_{vu}$, which, by \Cref{lem:kappa_mix_len} and $\kappa$-uniformity, is at least $\frac{s}{2\kappa}$.
After scaling by a factor of $\frac{1}{n}$ and observing again that $\ppi_v \ge \frac{1}{\kappa n}$ by $\kappa$-uniformity,
a Chernoff bound then gives
\[
 \Pr\left[
 \left|\frac1n\sum_u\widetilde{\hh}_u^{(=\ell)}(v)
       -\frac1n\sum_u\hh_u^{(=\ell)}(v)\right|
       >\eps\ppi_v\right]
 \leq 2\exp\!\left(-\Omega(\eps^2s/\kappa)\right).
\]
For a sufficiently large constant $C'$ in the bound on $s$, a union bound ensures that the inequality holds over all vertices with high probability.
\end{proof}

We now prove the main theorem. 
\begin{proof}[Proof of \Cref{thm:dynsta_formal}]
Our dynamic data structure uses the dynamic $(\ell,s)$-random walks algorithm (Algorithm~\ref{alg:dyn_rw}) with 
\footnote{We use the black-box dynamic data structure for ease of exposition.}
\[
 \ell=\Theta\!\left(
   \kappa\frac{\log(\kappa n\eps^{-1})}{\log n}\right),
 \qquad
 s=\Theta(\kappa\eps^{-2}\log n).
\]
with constants large enough such that \Cref{lem:kappa_mix_len,lem:stationary-endpoints} each hold at length $\ell-1$ with error $\frac{\eps}{2}$.
Then with high probability, the sampled hitting vector for length $\ell-1$ walks gives $(1\pm\eps)$ approximation for the stationary distribution.

\Cref{lemma:dir_dyn_rw} then gives the desired running time and space bounds.
We maintain the normalized sum of hitting vectors over all starting vertices in a dense vector.
Updating this vector takes constant time per resampled walk.
Since a query reads one counter, the query time is $O(1)$.
\end{proof}

\subsection{Dynamic Commute Time Estimations}
In this section, we prove the dynamic commute time result in \Cref{thm:dyncomm-formal}.

To better quantify the commute time and the necessary accuracy, we give a lower bound in \Cref{lem:comm_lb} using the following Kac's return time identity for random walks.
\begin{lemma}[Kac's return time~\protect\cite{LevinP17},
Lemma~21.12] \label{lem:kac-return-time}
For a finite irreducible random walk with stationary distribution $\ppi$ and a vertex $v$, define its first positive return time by $T_v^+=\min\{t\ge1: \text{visits } v \text{ at time } t\}$.
Then, for every $v$,
\begin{equation}
    \E[T_v^+ | \text{starts at } v]=\frac{1}{\ppi_v}.
    \label{eq:kac-return-time}
\end{equation}
\end{lemma}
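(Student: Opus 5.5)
We prove Kac's identity by analyzing the excursion structure of the chain started at $v$; the argument is the classical one via the stationary measure built from excursions. Irreducibility on a finite state space gives $\E_v[T_v^+]<\infty$. Indeed, there are $t_0$ and $\alpha>0$ such that from every state the walk hits $v$ within $t_0$ steps with probability at least $\alpha$. Hence $\Pr_v[T_v^+>kt_0]\le(1-\alpha)^k$, and the expectation is finite.

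Next, for each vertex $x$ we define the expected number of visits to $x$ during one excursion from $v$:
\[
\mu_x \defeq \E_v\Big[\sum_{t=0}^{T_v^+-1}\mathbf{1}[X_t=x]\Big].
\]
By construction $\mu_v=1$ and $\sum_x\mu_x=\E_v[T_v^+]<\infty$. The key step is to show that $\mu$ is invariant, i.e.\ $\mu\PP=\mu$ in the row-vector convention (equivalently $\PP\mu=\mu$ in the paper's column convention). Write $\mu_y=\sum_{t\ge0}\Pr_v[X_t=y,\,T_v^+>t]$. We then compute
\[
\sum_x\mu_x\PP(x,y)=\sum_{t\ge0}\Pr_v[X_{t+1}=y,\,T_v^+>t].
\]
This uses the Markov property together with the fact that the event $\{T_v^+>t\}$ is determined by $X_1,\dots,X_t$. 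For $y\ne v$, the event $\{X_{t+1}=y,\,T_v^+>t\}$ coincides with $\{X_{t+1}=y,\,T_v^+>t+1\}$, so after reindexing the sum equals $\mu_y$; the $t=0$ term of $\mu_y$ vanishes because $X_0=v\ne y$. For $y=v$, the sum becomes $\sum_t\Pr_v[T_v^+=t+1]=1=\mu_v$. This bookkeeping, especially separating the cases $y=v$ and $y\ne v$ and handling the boundary index $t=0$, is the only real obstacle in the argument.

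Finally, since $\mu$ is a nonnegative invariant measure with finite total mass, $\mu/\sum_x\mu_x$ is a stationary distribution. An irreducible finite chain has a unique stationary distribution, which follows, for instance, from Perron--Frobenius or from the fact that harmonic functions are constant. Therefore $\ppi_v=\mu_v/\E_v[T_v^+]=1/\E_v[T_v^+]$, which is the claimed identity. We also note that laziness changes neither $\ppi$ nor the validity of the argument; the identity holds for whichever walk, lazy or non-lazy, defines $T_v^+$, with that walk's own stationary distribution.
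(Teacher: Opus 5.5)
Your proof is correct, and it takes a different route from the one the paper relies on. The paper gives no proof of its own; it only cites Kac's lemma from the countable-state-space chapter of Levin--Peres.

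Your steps all hold up. The geometric tail bound gives $\E_v[T_v^+]<\infty$. The invariance computation for the excursion measure $\mu$ handles the $y=v$ versus $y\neq v$ split and the vanishing $t=0$ term correctly. Normalizing and using uniqueness finishes the argument. Nothing requires aperiodicity, which matters because \Cref{lem:comm_lb} applies the lemma to the non-lazy walk.

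The argument in the cited chapter instead uses time reversal. It assumes a stationary $\ppi$ exists and passes to the reversed chain $(Y_t)$. It then sums the path-reversal identity
\[
\ppi_x \Pr_x\bigl[T_S^+\ge t,\ X_t=y\bigr]=\ppi_y\,\widehat{\Pr}_y\bigl[T_S^+=t,\ Y_t=x\bigr]
\]
over $x\in S$, all $y$, and $t\ge 1$. This gives $\sum_{x\in S}\ppi_x\E_x[T_S^+]=\widehat{\Pr}_{\ppi}[T_S^+<\infty]$, which equals $1$ by recurrence of the reversed chain.

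Each route buys something different.
\begin{itemize}
\item The time-reversal route needs neither uniqueness of $\ppi$ nor an a priori bound on the return time. It also proves the stronger statement for an arbitrary set $S$ on a countable state space.
\item Your excursion-measure route is the one Levin--Peres use for finite chains in their first chapter. It is more elementary and constructs the stationary distribution as a by-product. It does lean on finiteness of the state space twice: once for $\E_v[T_v^+]<\infty$ and once for uniqueness.
\end{itemize}
The paper applies the lemma only to a single vertex of a finite strongly connected graph, so your argument covers everything that is needed.
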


\begin{lemma}[Commute time lower bound]
\label{lem:comm_lb}
For any $\kappa$-uniform graph $G$ with stationary $\ppi$, for any distinct vertices $u,v$,
\[
C_{u,v} \ge \frac{1}{\min\{\ppi_u,\ppi_v\}} \ge \frac{n}{\kappa}
\]
\end{lemma}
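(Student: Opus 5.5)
We prove the lower bound $C_{u,v}\ge 1/\ppi_v$ by a renewal argument based on Kac's return time identity (\Cref{lem:kac-return-time}). By symmetry of the commute time in $u$ and $v$, the same argument gives $C_{u,v}\ge 1/\ppi_u$, and hence $C_{u,v}\ge 1/\min\{\ppi_u,\ppi_v\}$. The second inequality then follows directly from $\kappa$-uniformity, since $\ppi_u,\ppi_v\le \kappa/n$ gives $1/\min\{\ppi_u,\ppi_v\}\ge n/\kappa$.

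For the first inequality, consider the walk started at $v$. Run it until it first hits $u$, then continue until it first returns to $v$. This is one commute excursion. Its expected length is $H_v(u)+H_u(v)=C_{u,v}$ by the strong Markov property, since after hitting $u$ the walk restarts from $u$. Every such excursion is a path from $v$ back to $v$ of positive length, because $u\ne v$ forces at least one step. Therefore the excursion length is pointwise at least the first positive return time $T_v^+$. Taking expectations and applying Kac's identity gives
\[
C_{u,v}\ge \E[T_v^+\mid \text{starts at } v]=\frac{1}{\ppi_v}.
\]

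We must fix which walk the times refer to. The paper defines hitting and commute times for the non-lazy walk, while $\ppi$ is the stationary distribution of $\PP$. The lazy and non-lazy walks share the same stationary distribution, and the non-lazy walk is irreducible because $G$ is strongly connected. Kac's lemma therefore applies directly to the non-lazy walk with the same $\ppi$. Alternatively, if one prefers to work with lazy times, note that lazy return times are exactly doubled, which only strengthens the bound.

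The only subtle step is the pathwise domination "commute excursion length $\ge T_v^+$". It holds because the commute excursion ends at a visit to $v$ at a positive time, and $T_v^+$ is by definition the first such time. Everything else is routine.
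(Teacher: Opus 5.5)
Your argument is correct and is essentially the paper's proof: you bound the commute excursion pathwise below by the first positive return time, apply Kac's identity (\Cref{lem:kac-return-time}) to the non-lazy walk, swap the roles of $u$ and $v$, and conclude with $\kappa$-uniformity. One small slip in your aside: the lazy walk's expected first return time to $v$ is still $1/\ppi_v$ by Kac, since the lazy walk has the same stationary distribution; what is exactly doubled are the lazy hitting and commute times (for $u\ne v$), and that is why a lazy version of the bound would only be stronger.
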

\begin{proof}
Consider a non-lazy random walk that starts from $u$, visits $v$ and traverse back to $u$.
This whole event must happen at least as late as the first return to $u$, which gives $C_{u,v} \ge \frac{1}{\ppi_u}$ by \eqref{eq:kac-return-time}.
Swapping the order gives $C_{u,v} \ge \frac{1}{\ppi_v}$.
The final lower bound of $\frac{n}{\kappa}$ follows by $\kappa$-uniformity.
\end{proof}

We show below, again, that short random walks suffice for getting high-accuracy estimations of commute time.
\begin{lemma}[Truncation length]
\label{lem:comm_len}
For any $\kappa$-uniform graph $G$ and any $0 < \eps < \frac{1}{2}$, for any truncation length 
$\ell \ge C \kappa \frac{\log(\kappa n \eps^{-1})}{\log n}$, where $C$ is a constant, the commute time satisfies for all distinct vertices $u,v$ that
\[
\Abs{C_{u,v} - \Par{\frac{\hh_{u,v}^{(\ell)}(u)}{2\ppi_u} - \frac{\hh_{u,v}^{(\ell)}(v)}{2\ppi_v}}} \le \eps C_{u,v}
\]
\end{lemma}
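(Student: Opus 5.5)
The plan is to write the exact commute time through the full difference hitting vector $\hh_{u,v}$ (the earlier Fact), and then bound the tail $\hh_{u,v}-\hh^{(\ell)}_{u,v}=\sum_{k\ge\ell}\PP^k(\ee_u-\ee_v)$ entrywise. The Fact gives
\[
C_{u,v} - \Par{\frac{\hh_{u,v}^{(\ell)}(u)}{2\ppi_u} - \frac{\hh_{u,v}^{(\ell)}(v)}{2\ppi_v}}
= \frac{\bigl(\hh_{u,v}-\hh^{(\ell)}_{u,v}\bigr)(u)}{2\ppi_u} - \frac{\bigl(\hh_{u,v}-\hh^{(\ell)}_{u,v}\bigr)(v)}{2\ppi_v}.
\]
By $\kappa$-uniformity each denominator satisfies $1/\ppi\le\kappa n$. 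It therefore suffices to show that every entry of the tail vector has absolute value at most $\eps/(\kappa^2 n^2)$ up to constants. Combined with \Cref{lem:comm_lb}, $C_{u,v}\ge n/\kappa$, this makes the total error at most $\kappa n\cdot \eps/(\kappa^2n^2)\le \eps/(\kappa n)\cdot O(1)\le \eps C_{u,v}$, with plenty of slack.

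To bound the tail, reuse the contraction argument from the proof of \Cref{lem:kappa_mix_len}. For $\xx=\ee_u-\ee_v$ we have $\vone^\top\xx=0$, and the $\kappa$-mixing property gives $\norm{\PP^\kappa\xx}_1\le n^{-9}\norm{\xx}_1$. Since $\PP$ is column-stochastic, $\norm{\PP\yy}_1\le\norm{\yy}_1$, and $\PP$ preserves $\vone^\top\yy=0$. Writing $k=j\kappa+r$ with $0\le r<\kappa$, this yields $\norm{\PP^k\xx}_1\le 2n^{-9j}$. Summing over $k\ge\ell$ in blocks of $\kappa$ consecutive steps gives
\[
\norm{\hh_{u,v}-\hh^{(\ell)}_{u,v}}_1\le \sum_{j\ge \lfloor \ell/\kappa\rfloor}2\kappa n^{-9j}\le 4\kappa n^{-9\lfloor\ell/\kappa\rfloor}.
\]
This bound also confirms that the series for $\hh_{u,v}$ converges, which justifies the Fact.

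Next, choose $C$ so that $\ell\ge C\kappa\log(\kappa n\eps^{-1})/\log n$ forces $n^{-9\lfloor \ell/\kappa\rfloor}\le \eps/(\kappa^3 n^2)$. Taking logarithms base $n$, this needs $9\lfloor\ell/\kappa\rfloor\ge \log_n(\kappa^3n^2/\eps)=O(\log(\kappa n/\eps)/\log n)$, so a large constant $C$ suffices. The floor costs at most one block, and that block is absorbed because the right-hand side is at least $1$. The resulting entrywise tail bound, multiplied by $\kappa n$, is at most $O(\eps/(\kappa n))\le O(\eps)\, C_{u,v}$ by \Cref{lem:comm_lb}, and adjusting $C$ absorbs the constant.

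The main obstacle is making sure the normalization in the Fact matches the lazy walk. The Fact uses $\PP$ lazy, and that is where the factor $2$ comes from. I would take the Fact as given, so the only delicate point is the conversion $1/\ppi\le\kappa n$ against $C_{u,v}\ge n/\kappa$: it costs a factor $\kappa^2 n^2$, which the logarithmic dependence on $\kappa n/\eps$ in $\ell$ easily pays for.
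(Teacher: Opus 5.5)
Your proposal is correct and follows essentially the paper's proof: express the error through the tail $\sum_{k\ge\ell}\PP^k(\ee_u-\ee_v)$, bound that tail with the $\kappa$-step $\ell_1$ contraction summed in blocks of $\kappa$ steps, and absorb the $1/\ppi$ factors using \Cref{lem:comm_lb}. The only difference is that the paper uses the sharper first inequality $C_{u,v}\ge 1/\min\{\ppi_u,\ppi_v\}$, so an entrywise tail error of $\eps$ already suffices. Your cruder conversion actually costs only a factor $\kappa^2$, not $\kappa^2 n^2$, but it still works, since it merely demands a polynomially smaller tail that the logarithmic dependence in $\ell$ easily pays for.
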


\begin{proof}
Since $C_{uv} = \frac{\hh_{u,v}(u)}{2\ppi_u}  - \frac{\hh_{u,v}(v)}{2\ppi_v}$, \Cref{lem:comm_lb} shows that it suffices to prove $|\hh_{u,v}-\hh_{u,v}^{(\ell)}|_\infty \le \eps$.

Using the same argument from \Cref{lem:kappa_mix_len}, we get for every $k \ge \kappa$,
\[
 \left\|\PP^k(\ee_u-\ee_v)\right\|_1
 \leq 2n^{-9\lfloor k/\kappa\rfloor}.
\]
By grouping the tail from $k \ge \ell$ into blocks of $\kappa$ steps, we get
\[
 \left\|\hh_{u,v}-\hh_{u,v}^{(\ell)}\right\|_\infty
 \leq \sum_{k\geq\ell}
       \left\|\PP^k(\ee_u-\ee_v)\right\|_1
 =O\!\left(\kappa n^{-9\lfloor\ell/\kappa\rfloor}\right).
\]
Our choice of $\ell \ge C \kappa \frac{\log (\kappa n \eps^{-1})}{\log n}$, suffices given a sufficiently large constant $C$.
\end{proof}

Similarly, we show that it also suffices to sample for each vertex a small set of random walks for getting a good estimation for the hitting vectors.
\begin{lemma}[Hitting vector concentration]
\label{lem:comm_endpoints}
For any $\kappa$-uniform graph $G$, sample a set of $(\ell,s)$-random walks and let $\widetilde{\hh}_v^{(\ell)}$ be the sampled estimator of $\hh_v^{(\ell)}$.
If $s \ge C\ell\eps^{-2}\log n$, where $C$ is a constant, with high probability, for all pairs of distinct vertices $u,v$,
\[
|\hh_{u,v}^{(\ell)}(u) - \widetilde{\hh}_{u,v}^{(\ell)}(u)| \le \eps.
\]
\end{lemma}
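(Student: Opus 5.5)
We need to bound $|\hh_{u,v}^{(\ell)}(u)-\widetilde{\hh}_{u,v}^{(\ell)}(u)|\le\eps$. By the triangle inequality it suffices to show
\[
|\hh_u^{(\ell)}(x)-\widetilde{\hh}_u^{(\ell)}(x)|\le\eps/2
\qquad\text{for every pair of vertices } u,x,
\]
and then apply this with $(u,u)$ and $(v,u)$. A union bound over the $n^2$ pairs costs only a $\log n$ factor in the failure exponent.

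First I will write the estimator as a sum of independent indicators. For fixed $u$ and each $k\in\{0,\ldots,\ell-1\}$, the $s$ walks of length $k$ started at $u$ give indicators $Y_{k,i}=\mathbb{1}[\text{walk }(k,i)\text{ ends at }x]$. Then
\[
\widetilde{\hh}_u^{(\ell)}(x)=\frac1s\sum_{k,i}Y_{k,i},
\qquad
\E\Bigl[\widetilde{\hh}_u^{(\ell)}(x)\Bigr]=\sum_{k<\ell}(\PP^k\ee_u)_x=\hh_u^{(\ell)}(x).
\]
All $s\ell$ indicators are mutually independent and take values in $[0,1]$, since the walks are sampled independently (or maintained as fresh independent walks, per \Cref{lemma:dir_dyn_rw}).

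Next I apply Hoeffding to $S=\sum_{k,i}Y_{k,i}$, a sum of $s\ell$ independent $[0,1]$ variables. This gives
\[
\Pr\Bigl[\,|S-\E S|>\tfrac{\eps}{2}s\,\Bigr]\le 2\exp\!\Bigl(-2\cdot\frac{(\eps s/2)^2}{s\ell}\Bigr)=2\exp\!\bigl(-\eps^2 s/(2\ell)\bigr).
\]
With $s\ge C\ell\eps^{-2}\log n$ this is at most $2n^{-C/2}$, so a union bound over all $n^2$ pairs $(u,x)$ gives high probability for $C$ large.

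The main thing to check is that the requirement $s\ge C\ell\eps^{-2}\log n$, linear in $\ell$, really suffices when the deviation is additive in absolute terms; the Hoeffding computation above confirms it. A sharper Bernstein bound using the variance $\sum_k(\PP^k\ee_u)_x$ is unnecessary.

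A second point is that in the dynamic setting the guarantee should hold at each fixed time over an oblivious update sequence. I will handle this by noting that at any fixed time the maintained walks are distributed as fresh independent walks, by the invariant in the proof of \Cref{lemma:dyn_rw}. A union bound over a polynomial number of time steps then costs only a constant increase in $C$.
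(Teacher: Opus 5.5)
Your proof is correct and follows the paper's argument: the paper also writes $\widetilde{\hh}_u^{(\ell)}(x)$ as $\frac{1}{s}$ times a sum of $s\ell$ independent endpoint indicators and proves concentration at each pair $(u,x)$. It then takes a union bound over all pairs and applies the triangle inequality at $(u,u)$ and $(v,u)$ with error $\eps/2$; the only cosmetic difference is that the paper invokes Bernstein with the crude total-variance bound $\ell/s$, which gives the same $\exp(-\Omega(s\eps^2/\ell))$ tail as your Hoeffding computation.
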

\begin{proof}
We use a similar argument as in the proof of \Cref{lem:stationary-endpoints}.
For vertices $u,x$, $\widetilde{\hh}_u^{(\ell)}(x)$ is the sum over all lengths $0\le k<\ell$ of the averages of $s$ independent terminating vertex indicators at $x$. 
It has expectation exactly $\hh_u^{(\ell)}(x)$.
Each sample has absolute value at most $\frac 1 s$ and the total variance is at most $\frac{\ell}{s}$.
A Bernstein bound then gives
\[
\Pr\left[
\left|\widetilde{\hh}_u^{(\ell)}(x)-\hh_u^{(\ell)}(x)\right|>\eps
\right]
\le 2\exp\!\left(-\Omega(s\eps^2/\ell)\right).
\]
For a sufficiently large constant $C$ in the bound on $s$, a union bound ensures that the inequality holds for all pairs of vertices with high probability. 
Applying it with error $\eps/2$ to $(u,u)$ and $(v,u)$ and using the triangle inequality proves the lemma.
\end{proof}

We have all the ingredients to prove \Cref{thm:dyncomm-formal}.
\begin{proof}[Proof of \cref{thm:dyncomm-formal}]
Our dynamic data structure again uses the dynamic $(\ell,s)$-random walks algorithm (Algorithm~\ref{alg:dyn_rw}) with parameters set to
\[
\ell=\Theta\!\left(
   \kappa\frac{\log(\kappa n\eps^{-1})}{\log n}\right),
   \quad
s=\Theta(\ell\eps^{-2}\log n)
  =\Theta\!\left(\kappa\eps^{-2}
    \log(\kappa n\eps^{-1})\right).
\]
The bound for $\ell$ is taken such that \Cref{lem:comm_len} bounds the truncation error by $\frac{\eps}{8}$ for all entries.
This value of $s$ also meets the sample requirement in \Cref{lem:comm_endpoints} used to estimate the hitting vector differences, giving error $\frac{\eps}{8}$.
By \Cref{lem:comm_lb}, a $(1+\eps/4)$-approximate stationary distribution would then suffice to give a $(1+\eps)$-approximate error for all-pairs commute time.
For such a stationary approximation, we invoke a separate copy of the algorithm from \Cref{thm:dynsta_formal}.

\Cref{lemma:dir_dyn_rw} gives the running time and space bounds for the algorithm.
Note that the time and space bounds in \Cref{thm:dyncomm-formal} are dominated by the current dynamic random walks algorithm.
The query reads 4 counters from the estimated hitting vectors and 2 queries from the approximate stationary, giving $O(1)$ time.
\end{proof}

\section*{Acknowledgement}
AI Disclosure: We used ChatGPT (GPT-5.6 Sol and GPT-6 Astra) to assist with the generation and refinement of mathematical proofs,
a harness written by Junzhao Yang for initial ideas generation,
as well as the Devin.AI harness for verifications.
The tools materially affected Sections 2, 3, 4, and 5.
The authors verified the correctness and originality of all content, including all mathematical arguments and references.

More details on each of the results and their timelines:
\begin{enumerate}
\item The undirected data structure in \cref{sec:undir} was figured out in discussions together with Lawrence Li in Fall 2025,
and has not structurally changed since.
Lawrence was offered authorship but declined,
we thank him for the discussions.
\item The hardness result in \cref{sec:hardness} was initially sketched by the authors, but influenced by a different construction generated by GPT 5.6.
The final simplified construction took components from both ideas, but benefited from bug fixings found using the Devin harness.
\item The directed data structure in \cref{sec:dir} was initially envisioned to be closer to the static algorithms based on Eulerian graphs,
but significantly simplified via GPT 5.6.
\end{enumerate}

\bibliographystyle{alpha}
\bibliography{references}

\newcommand{\etalchar}[1]{$^{#1}$}
\begin{thebibliography}{CGH{\etalchar{+}}20}

\bibitem[BP17]{BringmannP12}
Karl Bringmann and Konstantinos Panagiotou.
\newblock Efficient sampling methods for discrete distributions.
\newblock volume~79, pages 484--508. Springer, 2017.

\bibitem[BPS20]{bernstein2020deterministic}
Aaron Bernstein, Maximilian {Probst Gutenberg}, and Thatchaphol Saranurak.
\newblock Deterministic decremental reachability, scc, and shortest paths via
  directed expanders and congestion balancing.
\newblock In {\em 2020 IEEE 61st Annual Symposium on Foundations of Computer
  Science (FOCS)}, pages 1123--1134, 2020.

\bibitem[CGH{\etalchar{+}}20]{ChenGHPS20}
Li~Chen, Gramoz Goranci, Monika Henzinger, Richard Peng, and Thatchaphol
  Saranurak.
\newblock Fast dynamic cuts, distances and effective resistances via vertex
  sparsifiers.
\newblock In {\em 2020 IEEE 61st Annual Symposium on Foundations of Computer
  Science (FOCS)}, pages 1135--1146, 2020.

\bibitem[Chu14]{C14}
Julia Chuzoy.
\newblock Flows, cuts and integral routing in graphs-an approximation
  algorithmist’s perspective.
\newblock In {\em Proc. of the International Congress of Mathematicians}, pages
  585--607, 2014.

\bibitem[CKL{\etalchar{+}}24]{ChenKLMP24}
Li~Chen, Rasmus Kyng, Yang~P. Liu, Simon Meierhans, and Maximilian
  Probst~Gutenberg.
\newblock Almost-linear time algorithms for incremental graphs: Cycle
  detection, sccs, s-t shortest path, and minimum-cost flow.
\newblock In {\em Proceedings of the 56th Annual ACM Symposium on Theory of
  Computing}, STOC 2024, page 1165–1173, New York, NY, USA, 2024. Association
  for Computing Machinery.

\bibitem[Dev06]{Devroye06}
Luc Devroye.
\newblock Chapter 4 nonuniform random variate generation.
\newblock In Shane~G. Henderson and Barry~L. Nelson, editors, {\em Simulation},
  volume~13 of {\em Handbooks in Operations Research and Management Science},
  pages 83--121. Elsevier, 2006.

\bibitem[EHHL25]{EHL25}
Antoine El-Hayek, Monika Henzinger, and Jason Li.
\newblock Fully dynamic approximate minimum cut in subpolynomial time per
  operation.
\newblock In {\em Proceedings of the 2025 Annual ACM-SIAM Symposium on Discrete
  Algorithms (SODA)}, pages 750--784. SIAM, 2025.

\bibitem[EHHL26]{EHL26}
Antoine El-Hayek, Monika Henzinger, and Jason Li.
\newblock Deterministic and exact fully-dynamic minimum cut of
  superpolylogarithmic size in subpolynomial time.
\newblock In {\em Proceedings of the 2026 Annual ACM-SIAM Symposium on Discrete
  Algorithms (SODA)}, pages 613--663. SIAM, 2026.

\bibitem[GRST21]{GoranciRST21}
Gramoz Goranci, Harald Räcke, Thatchaphol Saranurak, and Zihan Tan.
\newblock The expander hierarchy and its applications to dynamic graph
  algorithms.
\newblock In {\em Proceedings of the 2021 ACM-SIAM Symposium on Discrete
  Algorithms (SODA)}, pages 2212--2228. {SIAM}, 2021.

\bibitem[HKNS15]{HenzingerKNS15}
Monika Henzinger, Sebastian Krinninger, Danupon Nanongkai, and Thatchaphol
  Saranurak.
\newblock Unifying and strengthening hardness for dynamic problems via the
  online matrix-vector multiplication conjecture.
\newblock In {\em Proceedings of the Forty-Seventh Annual ACM Symposium on
  Theory of Computing}, STOC '15, page 21–30, New York, NY, USA, 2015.
  Association for Computing Machinery.

\bibitem[HLS24]{HLS24}
Bernhard Haeupler, Yaowei Long, and Thatchaphol Saranurak.
\newblock Dynamic deterministic constant-approximate distance oracles with
  $n^{\epsilon}$ worst-case update time.
\newblock In {\em 2024 IEEE 65th Annual Symposium on Foundations of Computer
  Science (FOCS)}, pages 2033--2044, 2024.

\bibitem[HPS22]{HenzingerPS22}
Monika Henzinger, Ami Paz, and A.~R. Sricharan.
\newblock {Fine-Grained Complexity Lower Bounds for Families of Dynamic
  Graphs}.
\newblock In Shiri Chechik, Gonzalo Navarro, Eva Rotenberg, and Grzegorz
  Herman, editors, {\em 30th Annual European Symposium on Algorithms (ESA
  2022)}, volume 244 of {\em Leibniz International Proceedings in Informatics
  (LIPIcs)}, pages 65:1--65:14, Dagstuhl, Germany, 2022. Schloss Dagstuhl --
  Leibniz-Zentrum f{\"u}r Informatik.

\bibitem[Knu97]{Knuth97}
Donald~E Knuth.
\newblock {\em The art of computer programming: Seminumerical algorithms,
  volume 2}.
\newblock Addison-Wesley Professional, 1997.

\bibitem[LP17]{LevinP17}
David~A Levin and Yuval Peres.
\newblock {\em Markov chains and mixing times}.
\newblock American Mathematical Society, 2017.

\bibitem[LS23]{LiS23}
Lawrence Li and Sushant Sachdeva.
\newblock A new approach to estimating effective resistances and counting
  spanning trees in expander graphs.
\newblock In {\em Proceedings of the 2023 Annual ACM-SIAM Symposium on Discrete
  Algorithms (SODA)}, pages 2728--2745. SIAM, 2023.

\bibitem[R{\"a}c02]{R02}
Harald R{\"a}cke.
\newblock Minimizing congestion in general networks.
\newblock In {\em The 43rd Annual IEEE Symposium on Foundations of Computer
  Science, 2002. Proceedings.}, pages 43--52, 2002.

\bibitem[ST04]{SpielmanT04}
Daniel~A. Spielman and Shang-Hua Teng.
\newblock Nearly-linear time algorithms for graph partitioning, graph
  sparsification, and solving linear systems.
\newblock In {\em Proceedings of the Thirty-Sixth Annual ACM Symposium on
  Theory of Computing}, STOC '04, page 81–90, New York, NY, USA, 2004.
  Association for Computing Machinery.

\bibitem[SW19]{SaranurakW19}
Thatchaphol Saranurak and Di~Wang.
\newblock {\em Expander Decomposition and Pruning: Faster, Stronger, and
  Simpler}, pages 2616--2635.
\newblock SIAM, 2019.

\end{thebibliography}

\end{document}